\documentclass[journal]{IEEEtran}

\usepackage{amsmath,amssymb}
\usepackage{amsthm}
\usepackage{xcolor}
\usepackage[ruled, linesnumbered]{algorithm2e}
\usepackage{lipsum}	
\usepackage{graphicx}
\usepackage{pifont}
\usepackage{paralist}
\usepackage{float}
\usepackage{hyperref}
\usepackage{pifont}
\usepackage{authblk}
\usepackage{caption}
\usepackage{multirow}
\usepackage{subcaption}
\usepackage[numbers]{natbib}
\usepackage[switch]{lineno}

\newtheorem{theorem}{Theorem}
\newtheorem{lemma}{Lemma}

\theoremstyle{definition}
\newtheorem{defn}{Definition}

\newtheorem{prop}{Proposition}

\theoremstyle{remark}

\newcommand{\ignore}[1]{}
\newcommand{\user}{{{\sf User}}\xspace}
\newcommand{\admin}{{{\sf Admin}}\xspace}
\newcommand{\analyst}{{{\sf Analyst}}\xspace}
\renewcommand{\epsilon}{\varepsilon}

\title{Detecting Anomalous LAN Activities under Differential Privacy$^*$\thanks{$^*$The definitive Version of Record
		was published in \textit{Security and Communication Networks, vol. 2022, Apr. 2022}. \url{https://doi.org/10.1155/2022/1403200}}}

\author[1]{Norrathep Rattanavipanon}
\author[2]{Donlapark Ponnoprat}
\author[3]{Hideya Ochiai}
\author[1]{Kuljaree Tantayakul}
\author[4]{\\Touchai Angchuan}
\author[4]{Sinchai Kamolphiwong}
\affil[1]{College of Computing, Prince of Songkla University, Phuket, 83120, Thailand \authorcr {\small\tt \{norrathep.r, kuljaree.t\}@phuket.psu.ac.th}}
\affil[2]{Data Science Research Center, Department of Statistics, Faculty of Science,\protect\\ Chiang Mai University, Chiang Mai, 50200, Thailand \authorcr {\small\tt donlapark.p@cmu.ac.th}} 
\affil[3]{Graduate School of Information Science and Technology, The University of Tokyo, Tokyo, 113-8656, Japan \authorcr {\small\tt ochiai@elab.ic.i.u-tokyo.ac.jp}} 
\affil[4]{Faculty of Engineering, Prince of Songkla University, Songkhla, 90110, Thailand \authorcr {\small\tt \{touch, ksinchai\}@coe.psu.ac.th}}

\begin{document}
	\maketitle
	
	\begin{abstract}
		Anomaly detection has emerged as a popular technique for detecting malicious activities in local area networks (LANs).
		Various aspects of LAN anomaly detection have been widely studied.
		Nonetheless, the privacy concern about individual users or their relationship in LAN has not been thoroughly explored in the prior work.
		In some realistic cases, the anomaly detection analysis needs to be carried out by an external party, located outside the LAN.
		Thus, it is important for the LAN admin to release LAN data to this party in a private way in order to protect privacy of LAN users;
		at the same time, the released data must also preserve the utility of being able to detect anomalies.
		This paper investigates the possibility of privately releasing ARP data that can later be used to identify anomalies in LAN.
		We present four approaches, namely na\"ive, histogram-based, na\"ive-$\delta$ and histogram-based-$\delta$, and show that they satisfy different levels of differential privacy -- a rigorous and provable notion for quantifying privacy loss in a system.
		Our real-world experimental results confirm practical feasibility of our approaches. With a proper privacy budget, all of our approaches preserve more than 75\% utility of detecting anomalies in the released data.
	\end{abstract}

	\section{Introduction}
	
	Security of local area networks (LANs) has been getting more attention in the last few decades.
	Traditional LAN defense mechanisms based on a firewall are no longer effective in preventing malware infection since malware can simply circumvent the firewall or infect the network through other means~\cite{vuln, yan2008revealing}.
	A prominent example is the recent emergence of ransomware that can infect LAN devices via phishing attacks; these attacks remain effective even if the LAN's firewall is active and configured correctly~\cite{ransom1, ransom2}. In addition, with the rise of the Internet-of-things (IoT), the so-called ``smart" devices have become widely popular and, at the same time, are also extremely vulnerable to malware attacks~\cite{iotmalware}.
	These devices may be infected from the outside world and introduce malware to the LAN.
	
	To overcome this challenge, several anomaly detection techniques have been proposed to detect malicious activities in LAN. 
	Among those, techniques based on the Address Resolution Protocol (ARP) are shown to be promising in detecting anomalous activities in LAN without requiring a change to existing devices~\cite{matsufuji2019arp, yasami2007arp}, making it suitable to the current IoT networks.
	
	Despite this success, there still remains a severe privacy concern to LAN users, which has not been thoroughly explored in the previous work. 
	Often times, the anomaly detection must be performed by an entity outside LAN~\cite{ren2019time, mobilio2019anomaly, yao2017anomaly} or third-party software~\cite{microsoftanomaly, tibco}.
	Thus, it is equally important to ensure privacy of the data exposed to this external and potentially malicious entity.
	For instance, a LAN admin in an enterprise may choose to outsource an anomaly detection analysis to an external widely-popular service, e.g., Microsoft's Anomaly Detector~\cite{microsoftanomaly}, or the admin simply wants to release some features of network data for transparency or academic purposes. 
	In either case, it would require the LAN admin to output network data (which is an input to the anomaly detection algorithm) to an untrusted party.
	Doing so may lead to having such party learn privacy-sensitive information about the LAN users. 
	For example, it may directly disclose personally identifiable information (PII), e.g., IP/MAC addresses, which can be used to uncover the identity of LAN users. It may also cause an indirect information leakage by revealing information about access patterns (e.g., the time of the day that a specific user is online) or relationship between users~\cite{hu2016relationship}.
	
	While it is possible to simply erase all users' sensitive information from the output data, this kind of technique does not provide strong and provable privacy guarantees. A motivated adversary may still be able to deanonymize users through other means, e.g., performing a side-channel analysis~\cite{srivatsa2012deanonymizing}
	or correlating the remaining network traces with the physical world data ~\cite{takbiri2018privacy}.
	Therefore, there is a need for a technique with \emph{rigorous} privacy guarantees, while preserving the utility of detecting anomalies in the LAN environment.~\\
	
	\noindent\textbf{Contributions:}
	To this end, the goal of this paper is to investigate the possibility of privately publishing ARP data that can later be used to identify anomalies in LAN. Our work presents the following contributions:
	
	\begin{itemize}
		\item \textbf{Privacy Notions for ARP Publication.} We identify four concrete privacy notions in the context of ARP-data publication.
		Each notion is defined over a different type of information that needs to be privacy-protected as well as the probability that this protection holds. 
		Specifically, they are derived from the widely-known \emph{differential privacy}~\cite{Dwork2006} notion, which allows us to mathematically prove whether a specific algorithm adheres to any of these notions.
		We argue that this is a necessary and essential step towards designing, implementing and deploying any privacy-preserving approach into the real world. Without it, it is doubtful whether any meaningful guarantee can be obtained from our approaches.
		
		\item \textbf{Releasing ARP for Anomaly Detection with Various Degrees of Privacy.} We present four approaches capable of privately releasing ARP data that still preserves the utility of detecting LAN anomalies.
		Our approach provides a wide range of privacy-preserving degrees, making them suitable to different scenarios:
		\begin{itemize}
			\item The first approach requires small additive perturbations to the input ARP data in exchange for privacy protection of user relationship.
			
			\item The second approach perturbs the input data by a relatively higher amount but it can attain a stronger privacy protection guarantee for each individual LAN device/user.
			
			\item The third and fourth are variants of the first two approaches that require even smaller data perturbations; however, they sacrifice some small probability that the privacy guarantee will not hold, making them an appropriate option for scenarios where data utility needs to be maximized.  
		\end{itemize}
		
		\item \textbf{Practicality via Real-world Deployment.} We demonstrate practicality of our approaches by implementing and deploying them as part of a large-scale real-world project, called ASEAN-Wide Cyber-Security Research Testbed Project~\footnote{\url{https://www.nict.go.jp/en/asean_ivo/ASEAN_IVO_2020_Project03.html}}. Overall, the aim of this project is three-fold: (1) to capture network data from multiple LANs across the ASEAN region, (2) to determine malware behaviors based on the captured data and (3) to make the captured data sharable in the public domain.
		Our work fits perfectly in this project as it fulfills the third goal by providing a privacy-preserving mechanism for releasing captured ARP data.
		
		\item \textbf{Evaluation on Real-world Dataset.}
		We evaluate our approaches on a real-world ARP dataset captured from 3 LANs over 30 weeks. 
		The experimental result shows feasibility of our approaches as they introduce only low error values ($<10$ in the root-mean-square error) to the original data. In addition, we assess utility of the released data by testing it on the existing LAN anomaly detector~\cite{matsufuji2019arp}. The result is promising as our approaches can achieve $75\%$ anomaly detection rate. 
		
	\end{itemize}

	\noindent\textbf{Organization:} The rest of the paper is organized as follows: Section~\ref{sec:related} overviews existing work related to LAN anomaly detection and differential privacy. The background in Address Resolution Protocol and differential privacy are discussed in Section~\ref{sec:prelim}. Section~\ref{sec:prob} describes the system and adversarial models targeted in this work. Section~\ref{sec:notions} presents privacy notions in the context of releasing ARP data. Section~\ref{sec:arp-eps} and Section~\ref{sec:arp-delta} present four approaches and prove that they satisfy privacy notions defined in the previous section. Experiments are carried out and reported in Section~\ref{sec:exp}. Several issues are discussed in Section~\ref{sec:disc}. Finally, the paper concludes in Section~\ref{sec:conclude}.
	
	\section{Related Work}\label{sec:related}
	
	\noindent\textbf{Differential privacy in anomaly detection.}
	To the best of our knowledge, there has been no prior work that proposes a release mechanism for ARP data with differential privacy guarantees while retaining the utility of anomaly detection in the LAN setting.
	The closest related work can be found in~\cite{mcsherry2010differentially}, where the authors employ PINQ differential privacy framework~\cite{mcsherry2009privacy} to detect network-wide traffic anomalies. The main difference between our work and the work in~\cite{mcsherry2010differentially} lies in the type and magnitude of the released data as well as the privacy guarantee. 
	The work in \cite{mcsherry2010differentially} aims to privately release \emph{link-level traffic volumes of ISP} whose overall value tends to be much larger than noise introduced by any differentially-private release mechanism. On the other hand, our work operates on more restricted input (ARP-degree) which generally contains a much smaller value, making it more noise-sensitive than ISP's traffic volume. Reducing this sensitivity poses a main challenge addressed in this work. Further, the work in~\cite{mcsherry2010differentially} provides \emph{no} privacy protection guarantee for individual network users. Achieving this guarantee is non-trivial, as discussed in Section~\ref{histogram}.
	
	Besides the work in~\cite{mcsherry2010differentially}, several existing work focuses on providing anomaly detection with differential privacy guarantees in non-networked settings, e.g., web browsing~\cite{fan2014monitoring}, social network~\cite{wang2016real}, health care~\cite{dankar2013practicing},
	or syndrome surveillance~\cite{fan2013differentially}. Due to the difference in the target setting, the aforementioned techniques are not directly applicable to our work.~\\ 
	
	
	\noindent\textbf{LAN anomaly detection.}
	There are a number of existing research that aims to detect anomalies in LAN \emph{without} providing privacy protection. 
	Zhang et al.~\cite{zhang2019unveiling} present an approach based on honeypot to detect malicious LAN activities.
	Yeo et al.~\cite{yeo2004framework} propose a framework to monitor a network traffic and detect anomalies in the Wireless LAN (WLAN) environment via the IEEE 802.11 MAC protocol.
	Nonetheless, this approach is specific to wireless LAN and thus cannot be directly applied to the wired LAN setting.
	Our approaches are based on ARP requests, making them suitable for both wired and wireless LAN environments. 
	
	Several prior work focuses on detecting LAN anomalies based on ARP-related data.
	Whyte et al.~\cite{whyte2005arp} propose an anomaly detection approach that distinguishes anomalous activities through statistical analyses of ARP traffic.
	Yasami et al.~\cite{yasami2007arp} propose to model normal ARP traffic behaviors using Hidden Markov Model.
	Farahmand et al.~\cite{farahmand2006multivariate} detect LAN anomalies based on four features: traffic rate, burstiness, dark space and sequential scan.
	Matsufuji et al.~\cite{matsufuji2019arp} present an anomaly detection algorithm based on the degree of destination of ARP requests.
	
	\section{Background}\label{sec:prelim}
	
	\subsection{Address Resolution Protocol (ARP)}
	
	In a nutshell, ARP is a request-response protocol that provides a mapping between dynamic IP addresses and permanent link-layer addresses (also known as MAC addresses), allowing one computer to discover a MAC address of another from its IP address. This protocol is essential in a LAN environment since it enables communication between any two computers within the same sub-network as follows:
	 
	In LAN, when one computer needs to connect with another, it uses ARP to broadcast a request asking for the MAC address associated with the IP address of the destination computer. 
	Therefore, an ARP request contains the requester's IP and MAC addresses as well as the destination's IP address.
	Upon receiving the ARP request, every computer checks whether the received IP address matches with one of its network interfaces. If it does, it unicasts an ARP response back to the requester along with its IP and MAC addresses. 
	At the end of this process, the requester successfully retrieves the destination's MAC address and can use this information to construct Ethernet frames for transmitting subsequent data to the target computer.
	
	Similar to other network protocols, ARP involves using sensitive data that has previously been shown to be directly (e.g., IP address) or indirectly (e.g., traffic volume~\cite{takbiri2018privacy}) linkable to the identity of network users.
	Hence, this privacy concern must be taken into account when designing an approach for releasing ARP data.
	
	\subsection{Differential Privacy (DP)}

    Consider a setting in which there are $n$ users who send individual data to a trusted curator. The curator then applies an algorithm $\mathcal{M}$ and outputs these results to an untrusted party. In a strong notion of privacy, the data of an individual must be kept private from strong adversaries -- even ones who get a hand on the data of the other users. 
    
    The \emph{differential privacy} (DP) is a viewpoint of this notion given in a seminal paper by Dwork, McSherry, Nissim, and Smith~\cite{Dwork2006}. First, we say that two databases $X$ and $X'$ are \emph{neighboring} if they differ by exactly one database entry. The differential privacy is then satisfied if changing $X$ to $X'$ does not change the probability of observing an output of $\mathcal{M}$ by very much. With differential privacy, presence of a single entry will not affect the published output by much. Therefore, outputs from a differentially-private algorithm cannot be used to infer about any single entry from the input dataset.
    \begin{defn}[Differential Privacy]\label{def:dp}
        An algorithm $\mathcal{M}:\mathcal{X}\to\mathcal{Y}$ satisfies $(\epsilon,\delta)$-\emph{differential privacy} ($(\epsilon,\delta)$-DP) if, for every pair of neighboring datasets $X$ and $X'$ and every subset $S\in\mathcal{Y}$,
        \[ \mathbb{P}\left(\mathcal{M}(X)\in S\right) \leq e^{\epsilon}\mathbb{P}\left(\mathcal{M}(X')\in S\right)+\delta, \]
\end{defn}

\noindent where $\epsilon$ is referred as a privacy budget. We will refer to $(\epsilon,0)$-DP as $\epsilon$-DP. Intuitively, smaller values of $\epsilon$ and $\delta$ lead to a stronger privacy guarantee. Conversely, higher values of $\epsilon$ and $\delta$ imply a weaker guarantee with possibly better utility/accuracy of the released data. 
	
A related notion of differential privacy is the concentrated differential privacy, which aims to control the moments of the \emph{privacy loss variable}: $f(Y)=\mathbb{P}(\mathcal{M}(X)=Y)/\mathbb{P}(\mathcal{M}(X')=Y)$, where $Y$ is distributed as $\mathcal{M}(X)$.
\begin{defn}[R\'enyi Divergence]\label{def:renyi}
    Let $P$ and $P'$ be probability densities. The R\'enyi divergence of order $\lambda\in (1,\infty)$ between $P$ and $P'$ is defined as:
    \begin{align*}
        \operatorname{D}_{\lambda}(P\| P')&=\frac{1}{\lambda-1}\log\int P(y)^{\lambda}P'(y)^{1-\lambda} \ dy\\ 
                                             &= \frac{1}{\lambda-1}\log\mathbb{E}_{y\sim P}\left[\frac{P(y)^{\lambda-1}}{P'(y)^{\lambda-1}}\right].
    \end{align*}
\end{defn}
\begin{defn}[Concentrated Differential Privacy~\cite{Bun2016}]\label{def:cdp}
    An algorithm $\mathcal{M}:\mathcal{X}\to\mathcal{Y}$ satisfies $\rho$-\emph{zero-concentrated differential privacy} ($\rho$-zCDP) if, for every pair of neighboring datasets $X$ and $X'$ and every $\lambda\in (1,\infty)$,
    \[ \operatorname{D}_{\lambda}(\mathcal{M}(X)\|\mathcal{M}(X'))\leq \lambda\rho. \]
\end{defn}

	One useful property of the differential privacy is that it is preserved under post-processing.
    \begin{prop}[Post-processing \cite{Dwork2014}]\label{prop:post}
        For any $(\epsilon,\delta)$-DP ($\rho$-zCDP) algorithm $\mathcal{M}:\mathcal{X}\to\mathcal{Y}$ and arbitrary random function $f:\mathcal{Y}\to\mathcal{Z}$, the algorithm $f\circ\mathcal{M}$ is also $(\epsilon,\delta)$-DP ($\rho$-zCDP).
	\end{prop}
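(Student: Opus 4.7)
The plan is to handle the two settings in parallel and, in each case, first reduce the problem to deterministic post-processing. Any randomized $f:\mathcal{Y}\to\mathcal{Z}$ can be written as $f(y)=g(y,R)$, where $g$ is deterministic and $R$ is an auxiliary random variable independent of $\mathcal{M}(X)$. If the claim is established for every deterministic $g(\cdot,r)$, then for $(\epsilon,\delta)$-DP the general case follows by integrating the deterministic inequality against the law of $R$, and for $\rho$-zCDP it follows by absorbing $R$ into the channel so that the deterministic bound applies directly.

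For $(\epsilon,\delta)$-DP with deterministic $f$, fix neighboring $X,X'$ and any measurable $T\subseteq\mathcal{Z}$, and set $S=f^{-1}(T)\subseteq\mathcal{Y}$. Because $f$ is deterministic, $\{f(\mathcal{M}(X))\in T\}=\{\mathcal{M}(X)\in S\}$, so Definition~\ref{def:dp} applied to $S$ gives
\[
\mathbb{P}(f(\mathcal{M}(X))\in T)=\mathbb{P}(\mathcal{M}(X)\in S)\le e^{\epsilon}\mathbb{P}(\mathcal{M}(X')\in S)+\delta=e^{\epsilon}\mathbb{P}(f(\mathcal{M}(X'))\in T)+\delta,
\]
which is exactly $(\epsilon,\delta)$-DP for $f\circ\mathcal{M}$. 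Extending to randomized $f$ is then a straightforward averaging step over the distribution of $R$, using only that $e^{\epsilon}$ and $\delta$ are constants independent of $r$.

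For $\rho$-zCDP the crux is the data-processing inequality for R\'enyi divergence: for any Markov kernel $K$ and distributions $P,P'$ on its input space, $\operatorname{D}_\lambda(KP\|KP')\le\operatorname{D}_\lambda(P\|P')$ for every $\lambda\in(1,\infty)$. Granted this, taking $K$ to be (the deterministic, and then randomized, version of) $f$ with $P=\mathcal{M}(X)$ and $P'=\mathcal{M}(X')$ yields
\[
\operatorname{D}_\lambda(f(\mathcal{M}(X))\|f(\mathcal{M}(X')))\le\operatorname{D}_\lambda(\mathcal{M}(X)\|\mathcal{M}(X'))\le\lambda\rho
\]
for every $\lambda>1$, which is precisely Definition~\ref{def:cdp} for $f\circ\mathcal{M}$. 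The main obstacle I anticipate is establishing the R\'enyi data-processing inequality cleanly: the most direct route is to observe, via Bayes' rule, that the likelihood ratio $(KP)(y)/(KP')(y)$ equals the conditional expectation of $P(X)/P'(X)$ given $Y=y$ under the joint law with $X\sim P'$ and $Y\sim K(\cdot|X)$, and then to apply Jensen's inequality to the convex map $t\mapsto t^{\lambda}$ before integrating in $y$. Stripping away the logarithm and the factor $1/(\lambda-1)$, this single conditional-expectation/Jensen step is the only non-trivial piece; everything else is bookkeeping or a direct appeal to Definitions~\ref{def:dp} and~\ref{def:cdp}.
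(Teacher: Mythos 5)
Your proof is correct. The paper itself gives no argument for this proposition---it is stated as a citation to Dwork--Roth and Bun--Steinke---and your write-up is essentially the standard proof from those sources: the preimage/convex-combination argument for $(\epsilon,\delta)$-DP, and the data-processing inequality for R\'enyi divergence (via the conditional-expectation identity $\frac{(KP)(y)}{(KP')(y)}=\mathbb{E}_{X\sim P'}\!\left[\tfrac{P(X)}{P'(X)}\,\middle|\,Y=y\right]$ plus Jensen applied to $t\mapsto t^{\lambda}$) for $\rho$-zCDP. Both reductions and the key Jensen step check out.
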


	There may be some certain situations in which we want to apply multiple DP algorithms, e.g., releasing continual or time-series data. In this case, the resulting algorithm is also differentially private. However, every new DP algorithm comes with a cost of privacy loss, as stated in the following proposition.
	\begin{prop}[Composition~\cite{Dwork2014}]\label{prop:comp}
        For any $(\epsilon,\delta)$-DP ($\rho$-zCDP) algorithms $\mathcal{A}_i:\mathcal{X}\to\mathcal{Y}_i$ for $i\in [k]$, the algorithm $\mathcal{A}_{[k]}:\mathcal{X}\to \prod_{i=1}^k \mathcal{Y}_k$ defined by $\mathcal{A}_{[k]}(X)=(\mathcal{A}_1(X),\ldots,\mathcal{A}_k(X))$ is $(k\epsilon,k\delta)$-DP ($k\rho$-zCDP).
	\end{prop}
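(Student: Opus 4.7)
The plan is to prove both bounds by induction on $k$, handling the $(\epsilon,\delta)$-DP and $\rho$-zCDP cases separately because their natural proof techniques differ. The base case $k=1$ is trivial in both. The key structural fact used throughout is that $\mathcal{A}_1,\ldots,\mathcal{A}_k$ depend on $X$ independently, so the joint output distribution factorizes into a product of marginals conditional on $X$.

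For the zCDP claim, the argument collapses to a one-line invocation of the tensorization of R\'enyi divergence. Since $\mathcal{A}_{[k]}(X)$ has density equal to the product of the densities of the individual $\mathcal{A}_i(X)$, expanding Definition~\ref{def:renyi} directly gives
\begin{equation*}
  \operatorname{D}_{\lambda}\bigl(\mathcal{A}_{[k]}(X)\,\big\|\,\mathcal{A}_{[k]}(X')\bigr)\;=\;\sum_{i=1}^k \operatorname{D}_{\lambda}\bigl(\mathcal{A}_i(X)\,\big\|\,\mathcal{A}_i(X')\bigr)\;\leq\; k\lambda\rho,
\end{equation*}
so the joint mechanism is $k\rho$-zCDP. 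The product form turns the single $\log\int\cdot\,dy$ into a sum of such integrals, each bounded by $\lambda\rho$ by hypothesis.

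For the $(\epsilon,\delta)$-DP claim I would condition on the first $k-1$ coordinates. For any measurable $S\subseteq\prod_{i=1}^k\mathcal{Y}_i$, write $S_{y_{<k}}=\{y_k:(y_{<k},y_k)\in S\}$ and use independence to obtain
\begin{equation*}
  \mathbb{P}\bigl(\mathcal{A}_{[k]}(X)\in S\bigr)\;=\;\mathbb{E}_{Y_{<k}\sim\mathcal{A}_{[k-1]}(X)}\bigl[\,\mathbb{P}(\mathcal{A}_k(X)\in S_{Y_{<k}})\,\bigr].
\end{equation*}
Then apply the DP guarantee of $\mathcal{A}_k$ pointwise in $y_{<k}$ and swap the outer expectation from $\mathcal{A}_{[k-1]}(X)$ to $\mathcal{A}_{[k-1]}(X')$ using the inductive hypothesis. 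The second step requires extending the DP inequality from indicator events to arbitrary $[0,1]$-valued integrands, which is a routine layer-cake argument applied to the level sets $\{y_{<k}:\mathbb{P}(\mathcal{A}_k(X)\in S_{y_{<k}})>t\}$ for $t\in[0,1]$.

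The main obstacle will be propagating the $\delta$ term without losing a multiplicative factor. A careless chaining of the two inequalities above leaves a residual of the form $e^{\epsilon}\delta+\delta$, which exceeds $2\delta$ and would balloon geometrically under the induction. To recover the clean additive bound $k\delta$, I would invoke the equivalent ``bad event'' characterization of approximate DP: $\mathcal{M}$ is $(\epsilon,\delta)$-DP iff, on an event of probability at least $1-\delta$, its output is $\epsilon$-indistinguishable from the corresponding output on $X'$. Under this reformulation the failure probabilities of the $k$ mechanisms simply union-bound to $k\delta$, while the $\epsilon$-factors compose cleanly to $k\epsilon$ on the intersection of the good events, delivering $(k\epsilon,k\delta)$-DP exactly.
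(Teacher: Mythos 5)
The paper does not prove this proposition at all --- it is quoted verbatim from Dwork and Roth~\cite{Dwork2014} as a known result --- so there is no in-paper argument to compare against; your proposal is a self-contained reconstruction of the standard textbook proofs, and it is essentially correct. The zCDP half is exactly right: for independently randomized, non-adaptive mechanisms the joint densities factor, the R\'enyi integral of a product splits into a product of integrals, and $\operatorname{D}_{\lambda}$ tensorizes additively, giving $k\lambda\rho$ immediately. For the $(\epsilon,\delta)$ half, your diagnosis of the $e^{\epsilon}\delta+\delta$ pitfall and your two proposed repairs are both legitimate, with one caveat each. The ``bad event'' characterization you invoke is not literally ``there is an event of probability $\ge 1-\delta$ on which the output is $\epsilon$-indistinguishable''; the correct iff statement (Dwork--Roth, Lemma 3.17, or Kasiviswanathan--Smith) is a mixture decomposition $\mathcal{M}(X)=(1-\delta)P'+\delta P''$, $\mathcal{M}(X')=(1-\delta)Q'+\delta Q''$ with $P',Q'$ $\epsilon$-indistinguishable; with that precise form the union bound gives total failure mass $1-(1-\delta)^k\le k\delta$ and the rest of your argument closes. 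Alternatively, your first route (conditioning plus layer-cake) does in fact deliver the clean additive $\delta$'s without the bad-event machinery, provided you truncate before integrating: bound $\mathbb{P}(\mathcal{A}_k(X)\in S_{y_{<k}})$ by $\min\bigl(1,\,e^{\epsilon}\mathbb{P}(\mathcal{A}_k(X')\in S_{y_{<k}})\bigr)+\delta$, pull the constant $\delta$ out of the expectation, and apply the layer-cake inequality only to the $[0,1]$-valued truncated integrand; the residual is then exactly $\delta+\delta_{[k-1]}$ per step, not $e^{\epsilon}\delta+\delta$. Either repair yields $(k\epsilon,k\delta)$-DP as claimed.
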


    To introduce one of the most ubiquitous $\epsilon$-DP algorithms, we start with the $\ell_1$-\emph{sensitivity} of a randomized algorithm $\mathcal{M}:\mathcal{X}\to\mathbb{R}^k$, which is the maximum $\ell_1$  change in the output as a result of modifying a single datum. 
We denote this sensitivity as $\Delta^{\mathcal{M}}$, and formally define it as:
\[\Delta^{\mathcal{M}} = \max_{\text{neighbor }X,X'} \|\mathcal{M}(X)-\mathcal{M}(X')\|_1.\]
\begin{theorem}[Laplace mechanism~\cite{Dwork2014}]\label{thm:lm}
    Let $\mathcal{M}:\mathcal{X}\to\mathbb{R}^k$ be an algorithm with sensitivity $\Delta^{\mathcal{M}}$ and $Y_i$ be a noise generated by sampling from a Laplace distribution at scale $= \Delta^{\mathcal{M}}/\epsilon$, i.e., $Y_i\sim\text{Laplace}(\Delta^{\mathcal{M}}/\epsilon)$, then the randomized algorithm $\mathcal{A}$ defined by
    \[\mathcal{A}(X)=\mathcal{M}(X)+(Y_1,\ldots,Y_k)  \]
is $\epsilon$-DP.
\end{theorem}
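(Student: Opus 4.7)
The plan is to bound the pointwise ratio of the output densities of $\mathcal{A}(X)$ and $\mathcal{A}(X')$ for neighboring $X,X'$, and then integrate over any measurable $S\subseteq\mathbb{R}^k$ to obtain the $\epsilon$-DP inequality. Write $\mu=\mathcal{M}(X)$ and $\mu'=\mathcal{M}(X')$. Because the noise coordinates $Y_1,\ldots,Y_k$ are independent Laplace$(\Delta^{\mathcal{M}}/\epsilon)$ random variables, $\mathcal{A}(X)$ has product density
\[p_X(y)=\prod_{i=1}^{k}\frac{\epsilon}{2\Delta^{\mathcal{M}}}\exp\!\left(-\frac{\epsilon\,|y_i-\mu_i|}{\Delta^{\mathcal{M}}}\right),\]
and analogously for $p_{X'}(y)$ with $\mu$ replaced by $\mu'$.

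First I would form the ratio $p_X(y)/p_{X'}(y)$; the normalizing constants cancel, leaving
\[\frac{p_X(y)}{p_{X'}(y)}=\exp\!\left(\frac{\epsilon}{\Delta^{\mathcal{M}}}\sum_{i=1}^{k}\bigl(|y_i-\mu'_i|-|y_i-\mu_i|\bigr)\right).\]
Next I would apply the reverse triangle inequality coordinatewise, $|y_i-\mu'_i|-|y_i-\mu_i|\leq |\mu_i-\mu'_i|$, and sum to obtain
\[\sum_{i=1}^{k}\bigl(|y_i-\mu'_i|-|y_i-\mu_i|\bigr)\leq \|\mu-\mu'\|_1=\|\mathcal{M}(X)-\mathcal{M}(X')\|_1\leq \Delta^{\mathcal{M}},\]
where the last bound is the definition of $\ell_1$-sensitivity for neighboring $X,X'$. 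Plugging this in yields the pointwise bound $p_X(y)\leq e^{\epsilon} p_{X'}(y)$.

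Finally I would integrate this pointwise inequality: for any measurable $S\subseteq\mathbb{R}^k$,
\[\mathbb{P}(\mathcal{A}(X)\in S)=\int_S p_X(y)\,dy\leq e^{\epsilon}\int_S p_{X'}(y)\,dy=e^{\epsilon}\,\mathbb{P}(\mathcal{A}(X')\in S),\]
which is exactly $(\epsilon,0)$-DP in the sense of Definition~\ref{def:dp}. The main obstacle is really just step two: making sure the sensitivity bound is applied in the correct direction so that the exponent is controlled by $\epsilon$, and not accidentally by a factor of $k$ that one might naively expect from having $k$ coordinates. The $\ell_1$-sensitivity definition is precisely what makes the coordinate-wise triangle inequalities combine into a single $\Delta^{\mathcal{M}}$ rather than $k\Delta^{\mathcal{M}}$, so the proof hinges on recognizing that this definitional choice was made exactly to match the Laplace density's additive structure.
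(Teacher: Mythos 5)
Your proof is correct and is the standard density-ratio argument for the Laplace mechanism (bound the pointwise likelihood ratio via the triangle inequality and the $\ell_1$-sensitivity, then integrate over $S$); the paper itself states Theorem~\ref{thm:lm} as a cited background result without proof, and your argument is essentially the one given in the cited reference. The only implicit assumption worth stating explicitly is that the $Y_i$ are drawn independently, which is what justifies the product form of the density.
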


In addition to the Laplace mechanism, the Gaussian mechanism is also commonly used to provide $\rho$-zCDP:
\begin{theorem}[Gaussian mechanism~\cite{Bun2016}]\label{thm:gauss}
    Let $\mathcal{M}:\mathcal{X}\to\mathbb{R}^k$ be an algorithm with sensitivity $\Delta^{\mathcal{M}}$ and $Y_i$ be a noise generated by sampling from a Gaussian distribution at scale $ \Delta^{\mathcal{M}}/\sqrt{2\rho}$, i.e., $Y_i\sim N(0,(\Delta^{\mathcal{M}})^2/2\rho)$, then the randomized algorithm $\mathcal{A}$ defined by
    \[\mathcal{A}(X)=\mathcal{M}(X)+(Y_1,\ldots,Y_k)  \]
    is $\rho$-zCDP.
\end{theorem}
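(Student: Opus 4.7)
The plan is to unpack Definition~\ref{def:cdp} and reduce the claim to a direct computation of the R\'enyi divergence between two multivariate Gaussians with a common diagonal covariance. Fix neighboring datasets $X,X'$ and write $\mu=\mathcal{M}(X)$, $\mu'=\mathcal{M}(X')$, $\sigma^2=(\Delta^{\mathcal{M}})^2/(2\rho)$. Then $\mathcal{A}(X)\sim N(\mu,\sigma^2 I_k)$ and $\mathcal{A}(X')\sim N(\mu',\sigma^2 I_k)$, with densities that factor across coordinates. So the entire problem reduces to showing
\[ \operatorname{D}_{\lambda}\bigl(N(\mu,\sigma^2 I_k)\,\|\,N(\mu',\sigma^2 I_k)\bigr)\;\leq\;\lambda\rho \]
for every $\lambda>1$.

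First I would establish the one-dimensional identity $\operatorname{D}_{\lambda}(N(a,\sigma^2)\|N(b,\sigma^2))=\lambda(a-b)^2/(2\sigma^2)$ by plugging the Gaussian densities into Definition~\ref{def:renyi}, completing the square in the exponent, and recognizing the remaining integral as that of another Gaussian (so it equals $1$); only the quadratic coefficient in $a-b$ survives the $\log$. Then by independence of the $k$ noise coordinates, the multivariate R\'enyi divergence tensorizes into a sum, giving
\[ \operatorname{D}_{\lambda}\bigl(N(\mu,\sigma^2 I_k)\,\|\,N(\mu',\sigma^2 I_k)\bigr) \;=\; \frac{\lambda\,\|\mu-\mu'\|_2^2}{2\sigma^2}. \]

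The next step is bounding $\|\mu-\mu'\|_2$ by $\Delta^{\mathcal{M}}$. Here I should flag a subtle point: the paper's Definition of $\Delta^{\mathcal{M}}$ is the $\ell_1$-sensitivity, whereas the Gaussian mechanism naturally calibrates to $\ell_2$-sensitivity. The bridge is the elementary inequality $\|v\|_2\leq\|v\|_1$ for $v\in\mathbb{R}^k$, so $\|\mu-\mu'\|_2\leq\|\mu-\mu'\|_1\leq\Delta^{\mathcal{M}}$. Substituting this bound and the chosen $\sigma^2$ yields
\[ \operatorname{D}_{\lambda}(\mathcal{A}(X)\|\mathcal{A}(X'))\;\leq\;\frac{\lambda(\Delta^{\mathcal{M}})^2}{2\sigma^2}\;=\;\lambda\rho, \]
which is exactly the $\rho$-zCDP condition.

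I do not expect a true obstacle: the Gaussian-density calculation is routine, and tensorization follows from independence. The only point requiring care is the $\ell_1$-vs-$\ell_2$ mismatch flagged above; the proof is correct as long as one is content with the (possibly loose) bound $\|\cdot\|_2\leq\|\cdot\|_1$, which is all that is needed to establish the stated inequality. A tighter calibration would require redefining $\Delta^{\mathcal{M}}$ as an $\ell_2$-sensitivity, but that is orthogonal to the soundness of the statement as given.
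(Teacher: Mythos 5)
Your proof is correct. The paper itself gives no proof of this theorem --- it is quoted as a known result from the cited reference --- but your argument is precisely the standard one from that source: compute the R\'enyi divergence between two equal-variance Gaussians via completing the square, use additivity over the independent coordinates to get $\lambda\,\|\mu-\mu'\|_2^2/(2\sigma^2)$, and substitute the calibrated $\sigma^2$. Your flag about the $\ell_1$-versus-$\ell_2$ sensitivity mismatch is a genuine subtlety in how the paper states the theorem (the Gaussian mechanism is canonically calibrated to $\ell_2$-sensitivity, while the paper's $\Delta^{\mathcal{M}}$ is the $\ell_1$-sensitivity), and you resolve it correctly with $\|v\|_2\leq\|v\|_1$, so the statement as written is sound, merely conservative for $k>1$; in the paper's actual applications the sensitivity bound is coordinatewise with a single affected entry, so the two norms coincide and nothing is lost.
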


In view of Proposition~\ref{prop:comp}, a composition of $N$ Laplace mechanisms at scale $N\Delta^{\mathcal{M}}/\epsilon=O(N)$ is $\epsilon$-DP, while that of $N$ Gaussian mechanisms at scale $\Delta^{\mathcal{M}}\sqrt{N/2\rho}=O(\sqrt{N})$ is $\rho$-zCDP. We see that, for successive use of a DP mechanism, the Gaussian mechanism gives comparatively smaller noise than the Laplace mechanism. The following lemma shows how the two definitions of differential privacy are related.
\begin{lemma}[\cite{Bun2016}]\label{lemma:bun}
    Any $\rho$-zCDP algorithm is also an $(\epsilon,\delta)$-DP algorithm for any given $\delta>0$ and
    \begin{equation}\label{lemma:cdp-adp}
        \epsilon = \rho+2\sqrt{\rho\log(1/\delta)}.
    \end{equation}
    Conversely, for any given $\epsilon$ and $\delta>0$, any $\rho$-zCDP algorithm where
    \begin{equation}\label{eq:adp-cdp}
        \rho = \left(\sqrt{\log(1/\delta)+\epsilon}-\sqrt{\log(1/\delta)}\right)^2,
    \end{equation}
    is also an $(\epsilon,\delta)$-DP algorithm.
\end{lemma}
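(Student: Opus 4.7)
My plan is to pass through the privacy loss random variable and show that the zCDP hypothesis gives a tight control on its moment generating function, then use a Chernoff argument to deduce an $(\epsilon,\delta)$-DP bound. Let $P$ and $P'$ denote the densities of $\mathcal{M}(X)$ and $\mathcal{M}(X')$, and define the privacy loss variable $Z=\log(P(Y)/P'(Y))$ with $Y\sim P$. The key observation is that Definition~\ref{def:renyi} can be rewritten as
\[
D_{\lambda}(P\|P')=\frac{1}{\lambda-1}\log\mathbb{E}\bigl[e^{(\lambda-1)Z}\bigr],
\]
so the zCDP hypothesis $D_{\lambda}\le \lambda\rho$ is equivalent to $\mathbb{E}[e^{(\lambda-1)Z}]\le e^{(\lambda-1)\lambda\rho}$ for every $\lambda>1$.

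Next, I would apply Markov's inequality to this MGF bound: for any $\lambda>1$,
\[
\mathbb{P}(Z\ge \epsilon)\le e^{-(\lambda-1)\epsilon}\mathbb{E}[e^{(\lambda-1)Z}]\le e^{(\lambda-1)(\lambda\rho-\epsilon)}.
\]
Now the main technical step: optimize the right-hand side over $\lambda$. Differentiating $(\lambda-1)(\lambda\rho-\epsilon)$ in $\lambda$ and setting to zero gives $\lambda^{*}=(\epsilon+\rho)/(2\rho)$, which is greater than $1$ whenever $\epsilon>\rho$ (the regime of interest). Substituting $\lambda^{*}$ back in collapses the exponent to $-(\epsilon-\rho)^{2}/(4\rho)$. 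Choosing $\epsilon=\rho+2\sqrt{\rho\log(1/\delta)}$ makes this tail probability exactly $\delta$, so $\mathbb{P}(Z\ge\epsilon)\le\delta$.

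To convert this tail bound into $(\epsilon,\delta)$-DP, I would use the standard decomposition: for any measurable $S$,
\[
\mathbb{P}(\mathcal{M}(X)\in S)\le \mathbb{P}\bigl(\mathcal{M}(X)\in S,\ Z\le\epsilon\bigr)+\mathbb{P}(Z>\epsilon).
\]
On the event $\{Z\le\epsilon\}$ one has $P(y)\le e^{\epsilon}P'(y)$ pointwise, so the first term is at most $e^{\epsilon}\mathbb{P}(\mathcal{M}(X')\in S)$, while the second is at most $\delta$ by the previous step. This yields the $(\epsilon,\delta)$-DP inequality of Definition~\ref{def:dp}, establishing the first part of the lemma.

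For the converse statement, I only need to invert the relationship $\epsilon=\rho+2\sqrt{\rho\log(1/\delta)}$. Writing $u=\sqrt{\rho}$, this is the quadratic $u^{2}+2u\sqrt{\log(1/\delta)}-\epsilon=0$, whose positive root is $u=\sqrt{\log(1/\delta)+\epsilon}-\sqrt{\log(1/\delta)}$; squaring recovers the expression for $\rho$ in~\eqref{eq:adp-cdp}. The main obstacle in the argument is really the tightness of the $\lambda$-optimization in the second paragraph; everything else (rewriting $D_{\lambda}$, Markov, and the tail-to-DP conversion) is bookkeeping once that step is carried out correctly.
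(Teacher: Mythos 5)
Your argument is correct: the paper does not prove this lemma but simply cites it from Bun--Steinke, and your reconstruction (rewriting $D_\lambda$ as a bound on the moment generating function of the privacy loss $Z$, applying Markov, optimizing $\lambda^*=(\epsilon+\rho)/(2\rho)>1$, and converting the tail bound $\mathbb{P}(Z\ge\epsilon)\le\delta$ into $(\epsilon,\delta)$-DP via the standard decomposition) is exactly the proof given in that reference. The converse is, as you say, pure algebra: the chosen $\rho$ is the inverse of the map $\rho\mapsto\rho+2\sqrt{\rho\log(1/\delta)}$, so the first part applies verbatim.
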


\begin{figure*}[!h]
\centering
\includegraphics[width=.8\linewidth]{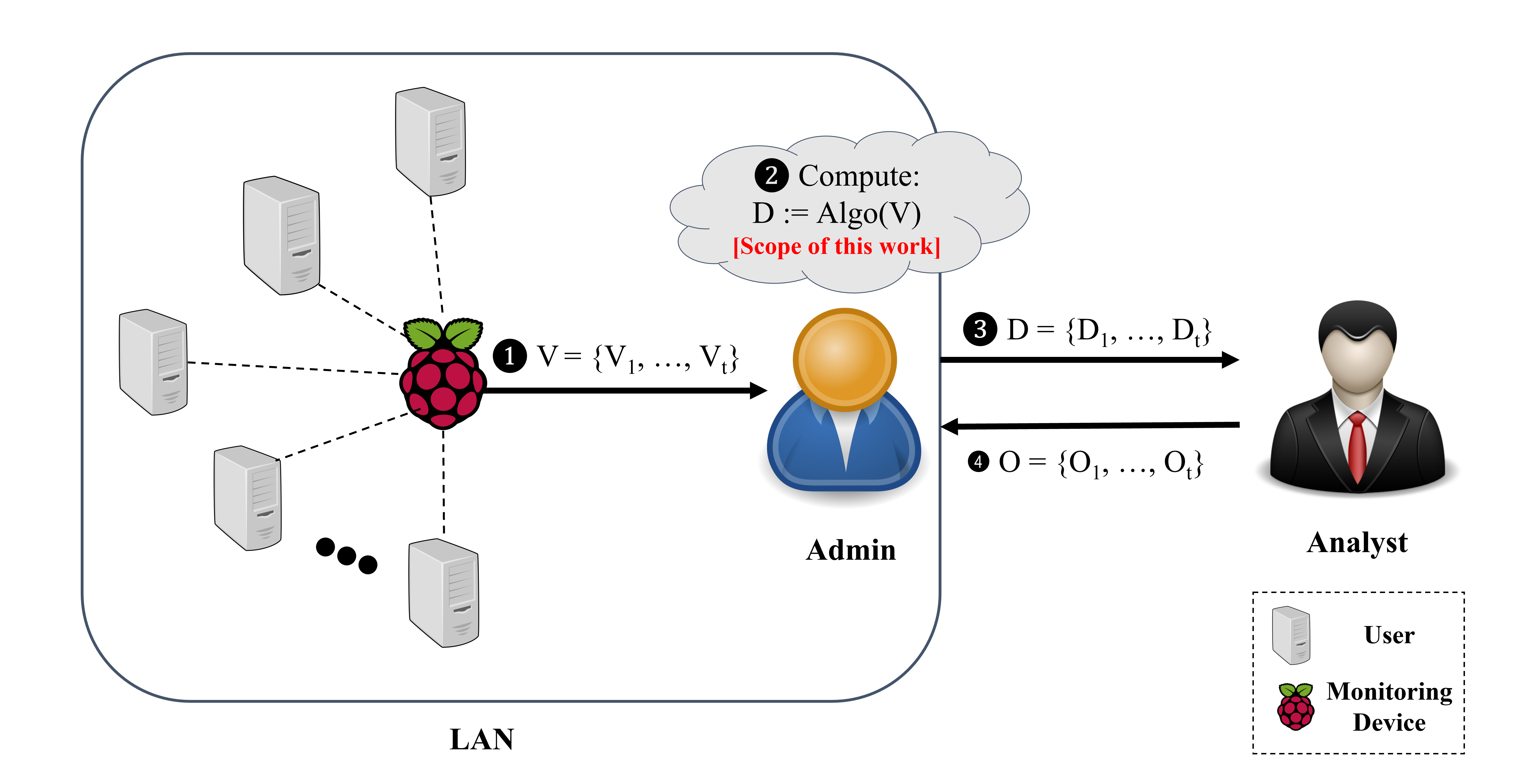}
\caption{System model considered in this work}
\label{fig:scenario}
\end{figure*}

	\section{System and Adversarial Models}\label{sec:prob}
	
	Figure~\ref{fig:scenario} illustrates the system model considered in this work.
	We consider a system in which an entity, called \admin, possesses a LAN consisting of $n$ \user-s (i.e., computing devices).
	In addition, \admin introduces a monitoring device to this LAN in order to observe ARP requests of all \user-s.
	We denote $V_{jk}$ to be aggregate ARP requests originated from \user $k$, measured and accumulated at the $j^{th}$ interval. 
    In this work, we assume the time interval to be in a unit of ``a week", since this time scale allows us to use data collected from a long period of time without losing too much privacy budget from the composition (Proposition~\ref{prop:comp}). $V_j$ is denoted the result after appending all ARP requests of all \user-s generated in week $j$, i.e. $V_j = \{V_{j1}, V_{j2}, ..., V_{jn}\}$.
	
	As shown in Figure~\ref{fig:scenario}, our system starts by having the monitoring node (periodically) send aggregate ARP requests -- $V = \{V_1, ..., V_t\}$ -- to \admin, corresponding to step \ding{182} in Figure~\ref{fig:scenario}. 
	\admin is interested in learning whether the LAN \emph{as a whole} has had any anomalous activities for the last $t$ weeks \textbf{in a private way}.
	Thus, in step \ding{183}, he proceeds to apply a certain algorithm $Algo$ with the goal of hiding sensitive information from the input $V$ and then releases the output $D$ to an external entity \analyst in step \ding{184}. 
	In step \ding{185}, \analyst in turn performs an anomaly detection analysis on $D$ and returns the result $O$ back to \admin. $O$ contains $O_i$ that allows \admin to identify whether the LAN contains an anomaly at week $i$. We summarize notation used throughout the paper in Table~\ref{tab:notation}~\\

	\begin{table}[!htb]
		\caption{Notation}
		\begin{center}
			\resizebox{\columnwidth}{!}{%
			\fbox{\small
				\begin{tabular}{r p{6.3cm} }
					\multicolumn{2}{c}{\underline{Differential Privacy (DP) Notation}}\\\\
					$\epsilon$		& Privacy budget \\
					$\delta$		& Probability of failing DP	guarantees	\\
					$\Delta^{\mathcal{M}}$ & Sensitivity of algorithm $\mathcal{M}$\\
                    $\operatorname{Laplace}(b)$ & Laplace distribution with mean $0$ and scale $b$  \\
					$N(\mu, \sigma^2)$ & Normal distribution with  mean $\mu$ and standard deviation $\sigma$ \\
					\hline \\
					\multicolumn{2}{c}{\underline{System Notation}}\\\\
					$n$ & Number of LAN \user-s \\
					$t$ & Number of data collection intervals \\
					$V_{jk}$ & \user $k$'s ARP requests aggregate at interval $j$ \\
					$V_j = \{V_{j1}, ..., V_{jn}\}$ & Aggregate ARP requests of all LAN \user-s at interval $j$\\
					$V = \{V_1, ..., V_t\}$ & Aggregate ARP requests of all LAN \user-s from interval $1$ to $t$ \\
					$D = \{D_1, ..., D_t\}$ & Output after applying privacy-preserving algorithm \\
					$O = \{O_1, ..., O_t\}$ & Anomaly detection output \\
					
				\end{tabular}
			}
		}
		\end{center}
		\label{tab:notation}
	\end{table}
	
	\noindent{\bf Adversarial Model:} \analyst is assumed to be honest-but-curious, i.e, he always honestly applies an anomaly detection algorithm on any given input data and returns the correct output to \admin. 
	However, during the process, he may attempt to learn sensitive information about \user-s or their relationship, and use it for his own benefits.~\\
	
	\noindent{\bf Goal \& Scope:} 
	In this work, we focus on addressing privacy concerns in the aforementioned system, where data from LAN is exposed to an external party. Hence, we do not consider other LAN settings capable of handling and processing this data locally, e.g., LANs in a large corporate with its own internal anomaly detection tool.
	
	The goal of this work is to design approaches that can be appropriately used as the algorithm $Algo$ in step~\ding{183} of Figure~\ref{fig:scenario}. 
	In other words, our approaches must allow the process of releasing ARP data with some levels of provable privacy guarantees.
	Besides privacy, utility of the privatized/released data for anomaly detection is also important. We must ensure that the privatized value does not change by a significant amount, compared to the non-privatized counterpart; otherwise, it will not be useful in detecting anomalies.
	

	\section{DP Notions for ARP-request data}\label{sec:notions}
	
	\begin{figure*}[!htp]
		\centering
		\includegraphics[width=.6\linewidth]{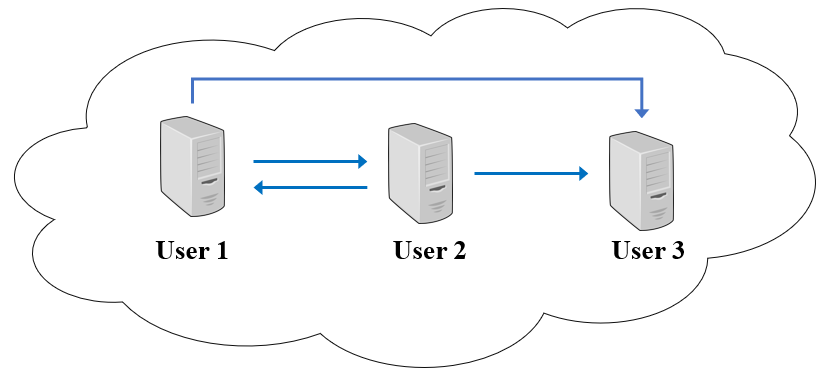}
		\caption{Illustration of a LAN with 3 \user-s and 4 ARP requests (represented by arrows).}
		\label{fig:arp-example}
	\end{figure*}
	
	In this section, we describe 4 variants of differential privacy notions related to our system model.
	The summary of DP notions discussed throughout this Section is shown in Table~\ref{tab:arp-notions}.
	
	\begin{table}[h!]
		\begin{center}
			\caption{Summary of DP notions for ARP-request Data}
			\label{tab:arp-notions}
			\resizebox{\columnwidth}{!}{  
				\begin{tabular}{c|c|c|c}
					Notion & Definition \# & Protected Info. & Protection Prob. \\ & &\\[-1em]
					\hline\hline & &\\[-.8em]
					($\epsilon, \delta$)-edge-DP & \ref{def:edge-delta-dp} & ARP requests & 1-$\delta$ \\ & &\\[-1em]
					$\epsilon$-edge-DP & \ref{def:edge-dp} & ARP requests & 1 \\ & &\\[-1em]
					($\epsilon, \delta$)-node-DP & \ref{def:node-delta-dp} & \user-s & 1-$\delta$  \\ & &\\[-1em]
					$\epsilon$-node-DP & \ref{def:node-dp} & \user-s & 1
				\end{tabular}
			}
		\end{center}
	\end{table}
	
	To understand privacy (i.e., what \emph{concrete} information needs to be private and hidden from \analyst) in our target scenario, we first describe the characteristic of ARP-request data.
	Figure~\ref{fig:arp-example} illustrates an example of a LAN that consists of 3 \user-s producing 4 ARP requests over a specific time interval.
	We define the (ARP-request) ``degree" of \user~$k$ as the number of \user-s that receives ARP requests from \user~$k$. In this example, the degrees of \user $1$, $2$ and $3$ are 2, 2 and 0, respectively.
	
	Using this model, we can view $V_j$ -- aggregate ARP-request data at week $j$ -- as a directed graph, where \user can be represented by a node; whereas
	an arrow (or a directed edge) from node $s$ to node $r$ indicates
	ARP request(s) generated by \user $s$ and sent to \user $r$ in the same time interval. 
	The degree of \user~$k$ is then equivalent to the number of directed edges originating from \user~$k$.
	
	As a directed graph, $V_j$ can not directly represent a database entry, required by Definition~\ref{def:dp}. 
	Thus, the aforementioned notion of differential privacy does not accurately capture the privacy guarantee in our scenario.
	Fortunately, there was prior work focusing on expressing differential privacy of a graph database.
    Specifically, the work in~\cite{Hay2009} presents notions of differential privacy between graphs by first defining two types of neighboring graphs:
    two graphs are \emph{edge-neighboring} if they differ by a single edge. Likewise, they are \emph{node-neighboring} if they differ by a single node.
    
We now proceed to present two notions of privacy in edge-neighboring graphs:

\begin{defn}[($\epsilon, \delta$)-edge-DP]
	\label{def:edge-delta-dp}
	Let $\mathcal{G}$ be the set of graphs between \user-s. An algorithm $\mathcal{M}:\mathcal{G}\to\mathcal{Y}$ satisfies ($\epsilon, \delta$)-edge-differential privacy or ($\epsilon, \delta$)-edge-DP if, for every pair of edge-neighboring graphs $G$ and $G'$ and every subset $S\subseteq\mathcal{Y}$,
	\[ \mathbb{P}\left(\mathcal{M}(G)\in S\right) \leq e^{\epsilon}\mathbb{P}\left(\mathcal{M}(G')\in S\right)+\delta. \]
\end{defn}

\begin{defn}[$\epsilon$-edge-DP]
	\label{def:edge-dp}
	An algorithm satisfies $\epsilon$-edge-differential privacy ($\epsilon$-edge-DP) if and only if it satisfies ($\epsilon, 0$)-edge-DP.
\end{defn}

Since an edge in our system refers to ARP requests between a pair of \user-s, Definition~\ref{def:edge-delta-dp} and~\ref{def:edge-dp} provide privacy protection for these ARP requests.
This means that an algorithm satisfying $\epsilon$-edge-DP/($\epsilon, \delta$)-edge-DP is guaranteed to reveal no information about all ARP requests exchanged between any pair of \user-s,
\emph{resulting in hiding the ARP relationship of all \user-s}.
This, for example, could hide the source of infection in LAN as it is common for malware to utilize ARP as the first step to discover and infect other LAN \user-s. 

Nonetheless, the guarantee provided by these definitions is not strong enough to protect privacy of individual \user-s. To achieve this stronger guarantee, we adopt the following notions:

\begin{defn}[($\epsilon, \delta$)-node-DP]
	\label{def:node-delta-dp}
	Let $\mathcal{G}$ be the set of graphs between \user-s. An algorithm $\mathcal{M}:\mathcal{G}\to\mathcal{Y}$ satisfies ($\epsilon,\delta$)-node-differential privacy or ($\epsilon,\delta$)-node-DP if, for every pair of node-neighboring graphs $G$ and $G'$ and every subset $S\subseteq\mathcal{Y}$,
	\[ \mathbb{P}\left(\mathcal{M}(G)\in S\right) \leq e^{\epsilon}\mathbb{P}\left(\mathcal{M}(G')\in S\right) + \delta. \]
\end{defn}

\begin{defn}[$\epsilon$-node-DP]
	\label{def:node-dp}
	An algorithm satisfies $\epsilon$-node-differential privacy ($\epsilon$-node-DP) if and only if it satisfies ($\epsilon, 0$)-node-DP.
\end{defn}

Indeed, by removing a node we also have to remove all of its edges. One then has that $\epsilon$-node-DP is stronger than $\epsilon$-edge-DP. In our scenario, an algorithm satisfying $\epsilon$-node-DP/($\epsilon,\delta$)-node-DP prevents information leakage about presence or absence of any individual \user.~\\

\noindent\textbf{Remark:}
recall $\delta$ represents an upper bound of the probability that an algorithm fails to satisfy the $\epsilon$-DP notion. 
As an example, an algorithm satisfying $(\epsilon, \delta)$-node-DP has at most $\delta$ probability that will leak some information about an individual node in a graph.
To make $(\epsilon, \delta)$-edge/node-DP notions meaningful in practice, one must minimize this failure probability by ensuring that $\delta$ is negligible in terms of number of data points ($\#p$) considered in the DP notion~\cite{Dwork2014}. One way to achieve this is to set $\delta$ to:

\begin{center}
	$\delta = \delta^\prime/\#p $ for some small $\delta^\prime$
\end{center}

In $(\epsilon, \delta)$-node-DP notion, $\#p$ is the number of nodes;
whereas, in $(\epsilon, \delta)$-edge-DP, $\#p$ corresponds to the number of possible directed edges $\approx (\#nodes)^2$. 
Thus, it is easy to see that $\delta$ in $(\epsilon, \delta)$-edge-DP must be set smaller than that in $(\epsilon, \delta)$-node-DP in order to attain the negligible probability.

	\section{Releasing ARP-request Data with $\epsilon$-edge/node-DP}\label{sec:arp-eps}
	In this section, we present two approaches, called \emph{na\"ive} and \emph{histogram-based}; the former guarantees $\epsilon$-edge-DP while the latter is proven to satisfy the $\epsilon$-node-DP notion. Later in Section~\ref{sec:arp-delta}, we describe variants of these approaches that satisfy the more relaxed $(\epsilon, \delta)$-edge/node-DP notions.
	
	\subsection{Na\"ive Approach}
	
	\begin{algorithm}[!ht]
		\caption{Na\"ive Approach}
		\label{alg:naive}
		\DontPrintSemicolon
		\KwIn{$V=\{V_1, V_2, ..., V_t\}$, $t$, $\epsilon$}
		\KwOut{$D = \{D_1, D_2, ..., D_t\}$}
		
		\For{$j =1$ \KwTo $t$}{
			$D_j \gets \textsc{Sum}(\textsc{Degree}(V_j))$ \;
			$D_j \gets  D_j +  \text{Laplace}(t/\epsilon)$ \; 
			\lIf{$D_j > 0$}{$D_j \gets \text{int}(D_j)$}
			\lElse{$D_j \gets 0$}
		}
	\end{algorithm}
	
	The na\"ive approach is described in Algorithm~\ref{alg:naive}. 
	In the rest of this section, we discuss non-trivial details of this approach and show that it indeed satisfies $\epsilon$-edge DP.

	\begin{theorem}\label{thm:naive}
		The na\"ive approach as described in Algorithm~\ref{alg:naive} is $\epsilon$-edge-DP.
	\end{theorem}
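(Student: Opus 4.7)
The plan is to establish Theorem~\ref{thm:naive} by treating Algorithm~\ref{alg:naive} as a sequential composition of $t$ single-week Laplace mechanisms, each of which I claim is $(\epsilon/t)$-edge-DP, and then invoking Proposition~\ref{prop:comp} to collapse the budget to $\epsilon$.

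The first step is to compute the $\ell_1$-sensitivity of $f(V_j) = \textsc{Sum}(\textsc{Degree}(V_j))$, viewed as a function from graphs to $\mathbb{R}$. In any directed graph, the sum of out-degrees equals the total edge count, so adding or removing a single directed edge changes $f$ by exactly $1$; hence $\Delta^f = 1$ under the edge-neighboring relation of Definition~\ref{def:edge-delta-dp}. Then, since line~3 of Algorithm~\ref{alg:naive} adds noise drawn from $\text{Laplace}(t/\epsilon) = \text{Laplace}(\Delta^f/(\epsilon/t))$, Theorem~\ref{thm:lm} yields that $f(V_j) + \text{Laplace}(t/\epsilon)$ is $(\epsilon/t)$-edge-DP as a release of $V_j$. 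The subsequent clipping to a non-negative integer in line~4 is a deterministic post-processing step, so by Proposition~\ref{prop:post} the published $D_j$ remains $(\epsilon/t)$-edge-DP.

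Finally, the full output $D = (D_1, \ldots, D_t)$ is a composition of $t$ such mechanisms applied to $V$. Proposition~\ref{prop:comp} then gives that the composition is $(t \cdot \epsilon/t)$-edge-DP, which is exactly $\epsilon$-edge-DP as required.

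The main obstacle is more conceptual than technical: Definition~\ref{def:edge-delta-dp} is stated for a single graph, whereas Algorithm~\ref{alg:naive} consumes the sequence $V = \{V_1, \ldots, V_t\}$. I would bridge this by lifting edge-neighboring to sequences---declaring $V$ and $V'$ to be neighbors if they differ by a single directed edge in some single week, while agreeing on all other weeks---so that the per-week sensitivity calculation plugs cleanly into the sequential-composition bound. With that lifting fixed and $\Delta^f = 1$ verified, the remaining steps are routine bookkeeping about the Laplace scale and the number of compositions.
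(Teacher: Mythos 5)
Your proof is correct and follows essentially the same route as the paper's: bound the edge-sensitivity of the weekly degree sum by $1$, apply the Laplace mechanism at scale $t/\epsilon$ to obtain $\epsilon/t$-edge-DP for each week, absorb the threshold-and-round step via post-processing (Proposition~\ref{prop:post}), and compose over the $t$ weeks via Proposition~\ref{prop:comp}. Your explicit lifting of the edge-neighboring relation to the sequence $V=\{V_1,\dots,V_t\}$ is a detail the paper leaves implicit, and it is a welcome clarification.
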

	
	\begin{proof}
    Let $V_{j}\in\mathcal{G}$ be the directed graph of ARP requests in week $j$. Let $\mathcal{M}$ be the algorithm that computes the weekly total degrees and $D_j=\mathcal{M}(V_j)$  (Line~2 of Algorithm~\ref{alg:naive}), which also corresponds to the total number of edges in $V_j$. To preserve $\epsilon$-edge-DP of each \user's ARP requests, one can simply use the Laplace mechanism. To do so, we need to find an upper bound of the sensitivity $\Delta^{\mathcal{M}}$. Let $V'_{j}$ be an edge-neighboring graph of $V_j$ in week $j$ and $D'_j=\mathcal{M}(V'_j)$ . Then, $\Delta^{\mathcal{M}} = |D_j-D'_j|\leq 1$ and we have the following Laplace mechanism $\mathcal{A'}$ (Line~2-3) guarantee the $\epsilon/t$-node DP:
    \[
        \mathcal{A'}(V_{j}) = \mathcal{M}(V_j)+Y_j,
    \] 
    where $Y_{j}\sim \text{Laplace}(t/\epsilon)$ (Line~3). 
    
    Algorithm~\ref{alg:naive} can then be represented as:
    \[
    \mathcal{A}(V) = \mathcal{P}(\mathcal{A'}(V_1), \ldots, \mathcal{A'}(V_t))
    \] 
    where $\mathcal{P}$ is a post-processing function (Line~4-5) that: (i) precludes a negative output by thresholding it to 0, and (ii) rounds a non-negative privatized value into the closest integer in order to prevent the floating point attack~\cite{Ilya2012}.
    
    By Proposition~\ref{prop:post} and~\ref{prop:comp}, we can conclude that this algorithm is $t\epsilon/t$-edge-DP or $\epsilon$-edge-DP.
    
    \end{proof}
    
    To prevent excessive information loss, one needs the Laplace noise to be smaller than $D_j$, i.e., $t/\epsilon < \mathbb{E}[D_j]$ or $\epsilon > t/\mathbb{E}[D_j]$. This can be achieved in realistic settings, e.g., $\epsilon = 2$ in our experiment (Section~\ref{sec:exp}) where $t=30$ and the lower quartile of $D_j$ is $20$.

    On the other hand, a similar analysis for the $\epsilon$-node-DP results in much bigger Laplace noises; consider two node-neighboring directed graphs $V_j,V'_j$ of $n$ \user-s. The degrees $D_j,D'_j$ defined as above satisfy $|D_j-D'_j|\leq n$, which cannot be improved further. Thus, in order to employ the Laplace mechanism, the noises have to be sampled from $\text{Laplace}(tn/\epsilon)$. In contrast to the edge-DP regime, the scale of the noise comes with a factor of $n$. As a result, for a large number of \user-s, it is no longer feasible to preserve both privacy and utility at the same time.
	
    \subsection{Histogram-based Approach}\label{histogram}

	As seen in the previous subsection, the na\"ive approach can not be used  to satisfy $\epsilon$-node-DP in practice due to its high sensitivity, leading to too strong additive noises which in turn significantly lower utility of the released data. Instead, we propose a second approach utilizing a histogram that helps reduce the $\epsilon$-node-DP sensitivity to a reasonable amount.
	
	Our histogram-based approach is shown in Algorithm~\ref{alg:histogram}. 
	The rationale behind this approach is to transform the degree data in such a way that its sensitivity is minimized when any \user is removed from $V_j$.
	Naturally, a histogram is a good fit for this approach since it provides a way to partition data into disjoint groups/bins, where each bin in this case represents a range of degrees.
	Thus, this approach first computes the degrees of each \user in a specific week and uses this degree data to construct a histogram, as shown in Line~2 of Algorithm~\ref{alg:histogram}.
	This histogram data minimizes the $\epsilon$-node-DP sensitivity because removing a \user from the histogram data affects only one bin, i.e., the one this \user belongs, and it only decreases its bin count by one; \emph{other histogram bins are unaffected by this change.}
	We then can apply the Laplace mechanism on each bin (Line~3), threshold and round the resulting value to the closest integer (Line~5-6) and finally return this noisy histogram as an output. 
	
	We now formally show that the histogram-based approach satisfies $\epsilon$-node-DP.

	\begin{theorem}\label{thm:hist}
		The histogram-based approach as described in Algorithm~\ref{alg:histogram} is $\epsilon$-node-DP.
	\end{theorem}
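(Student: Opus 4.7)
The plan is to follow the template of Theorem~\ref{thm:naive} but with the degree histogram in place of the total degree sum. Let $\mathcal{H}:\mathcal{G}\to\mathbb{Z}^{B}$ denote the base mechanism that maps the week-$j$ graph $V_j$ to its vector of bin counts, where $B$ is the number of degree bins. Algorithm~\ref{alg:histogram} can then be written as
\[ \mathcal{A}(V)=\mathcal{P}\bigl(\mathcal{H}(V_1)+Y_1,\ldots,\mathcal{H}(V_t)+Y_t\bigr), \]
where each $Y_j\in\mathbb{R}^B$ is a vector of independent $\operatorname{Laplace}(t/\epsilon)$ samples and $\mathcal{P}$ is the deterministic post-processing step (Lines~5--6) that thresholds negatives to $0$, rounds to the nearest integer, and also rules out Mironov's floating-point attack~\cite{Ilya2012}.

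The heart of the argument is bounding the $\ell_1$-sensitivity $\Delta^{\mathcal{H}}$ under node-neighboring graphs. Let $V_j$ and $V'_j$ be node-neighboring, differing by a single \user $k$ whose degree lies in some bin $d_k$. Since $\mathcal{H}$ places each \user into the single bin corresponding to that \user's own degree, the bin counts of $\mathcal{H}(V_j)$ and $\mathcal{H}(V'_j)$ agree on every bin except bin $d_k$, which differs by exactly $1$; hence $\|\mathcal{H}(V_j)-\mathcal{H}(V'_j)\|_1\leq 1$. Applying Theorem~\ref{thm:lm} with $\Delta^{\mathcal{H}}\leq 1$ and noise scale $t/\epsilon$ then yields that each per-week release $\mathcal{H}(V_j)+Y_j$ is $\epsilon/t$-node-DP.

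With the weekly guarantee in hand, the rest is mechanical. Proposition~\ref{prop:post} absorbs the thresholding and rounding without degrading the $\epsilon/t$ parameter, and Proposition~\ref{prop:comp} composes the $t$ weekly releases into the claimed $\epsilon$-node-DP guarantee for $\mathcal{A}(V)$.

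The delicate point -- and the genuine motivation for moving from the na\"ive sum to the histogram -- is the sensitivity step. The na\"ive degree sum has node-sensitivity as large as $n$, since removing \user $k$ zeroes out $k$'s entire degree contribution, which can be up to $n-1$. By contrast, the histogram representation collapses each \user's footprint to a single count of $1$ in a single bin, independent of the magnitude of its degree, so the sensitivity drops to $1$ and the $t/\epsilon$ Laplace scale in Line~3 becomes feasible. Articulating cleanly that \emph{no} other bin is perturbed when a \user is removed is the main obstacle I would need to write out with care.
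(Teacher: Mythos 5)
Your proof follows essentially the same route as the paper's: bound the node-sensitivity of the degree histogram by $1$, apply the Laplace mechanism at scale $t/\epsilon$ to each weekly histogram, absorb the threshold-and-round step via Proposition~\ref{prop:post}, and compose the $t$ weekly releases via Proposition~\ref{prop:comp} to get $\epsilon$-node-DP. The one point you flag as delicate --- that adding or removing a single user perturbs only that user's own bin and no others --- is asserted in exactly the same way, without further elaboration, in the paper's proof, so your treatment matches the published argument.
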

\begin{proof}
   Let $V_j$ and $V'_j$ be node-neighboring directed graph at time $j$, i.e., $V'_j$ can be obtained from $V_j$ by adding or removing a single node. Let $\mathcal{M}:\mathcal{G}\to \mathbb{R}^{k}$ be the algorithm that computes the histogram of the degrees, i.e., the entries of $\mathcal{M}(V_j)$ and $\mathcal{M}(V'_j)$ are the count of nodes by their degrees. Then $\mathcal{M}(V_j)$ and $\mathcal{M}(V'_j)$ differ by one in the entry corresponding to the degree of \user $j$, who only exists in either $V_j$ or $V'_j$. Therefore, $\Delta^{\mathcal{M}} = |\mathcal{M}(V)-\mathcal{M}(V'_j)|\leq 1$. 
   
    Observe that Line~2-7 of Algorithm~\ref{alg:histogram} can be written as a randomized algorithm $\mathcal{A'}:\mathcal{G}\to \mathbb{R}^{k}$ defined by   
    \[
    \mathcal{A'}(V_j) = \mathcal{P}(\mathcal{M}(V_j)+(Y_1,\dots,Y_k))
    \] 
where $Y_i\sim \text{Laplace}(t/\epsilon)$ and $\mathcal{P}$ corresponds to the \emph{threshold-then-round} function computed on all bin counts (Line~5-6). It follows from Theorem~\ref{thm:lm} and Proposition~\ref{prop:post} that $\mathcal{A'}$ is $\epsilon/t$-node-DP.

Then, we can define Algorithm~\ref{alg:histogram} as a randomized algorithm $\mathcal{A}$ as follows:
\[
\mathcal{A}(V) = (\mathcal{A'}(V_1),\ldots,\mathcal{A'}(V_t))
\]  
By Proposition~\ref{prop:comp}, we have that the histogram-based approach (described in Algorithm~\ref{alg:histogram}) is $t\epsilon/t$-node-DP or $\epsilon$-node-DP.

\end{proof}

	\begin{algorithm}[!t]
		\caption{Histogram-based Approach}
		\label{alg:histogram}
		\DontPrintSemicolon
		\KwIn{$V=\{V_1, V_2, ..., V_t\}$, $t$, $\epsilon$}
		\KwOut{$D = \{D_1, D_2, ..., D_t\}$}
		
		\For{$j =1$ \KwTo $t$}{
			$D_j \gets \textsc{Histogram}(\textsc{Degree}(V_j))$ \;
			\ForEach{$bin \in D_j$} {
				$bin.count \gets bin.count + \text{Laplace}(t/\epsilon)$ \;
				\lIf{$bin.count > 0$}{$bin.count \gets \text{int}(bin.count)$}
				\lElse{$bin.count \gets 0$}
			}
		}
	\end{algorithm}

	\section{Releasing ARP-request Data with $(\epsilon, \delta)$-edge/node-DP}\label{sec:arp-delta}
	
    The approaches in the previous section require adding a noise proportional to $t$, which may not scale well in practice when $t$ is large. We explore an alternative by instead adopting the Gaussian Mechanism in order to reduce additive noise from $O(t)$ to $O(\sqrt{t})$. We call these variants, \emph{na\"ive-$\delta$} and \emph{histogram-based-$\delta$}, which guarantee ($\epsilon,\delta$)-edge-DP and ($\epsilon,\delta$)-node-DP respectively.
	
	\subsection{Na\"ive-$\delta$ Approach}
	
    In conjunction with the na\"ive approach (Algorithm~\ref{alg:naive}) which gives a strong privacy guarantee by adding considerably large amount of noises, we develop here another approach that adds less noises, but provides a weaker $(\epsilon,\delta)$-edge DP guarantee. The algorithm is described in Algorithm~\ref{alg:naive-delta}. Similar to Algorithm~\ref{alg:naive}, we round the noisy outputs to the nearest integers to protect the data from floating point attacks. In the rest of this section, we discuss non-trivial details of this approach and show that it indeed satisfies $(\epsilon,\delta)$-edge DP.

	\begin{theorem}\label{thm:naive2}
        The na\"ive-$\delta$ approach as described in Algorithm~\ref{alg:naive-delta} is $(\epsilon,\delta)$-edge-DP.
	\end{theorem}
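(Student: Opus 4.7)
The plan is to mirror the proof of Theorem~\ref{thm:naive}, but replace the Laplace mechanism with the Gaussian mechanism, accumulate privacy loss in the zCDP regime (which composes more favorably), and only at the very end convert the total zCDP guarantee into an $(\epsilon,\delta)$-DP statement via Lemma~\ref{lemma:bun}. Concretely, I expect Algorithm~\ref{alg:naive-delta} to differ from Algorithm~\ref{alg:naive} only in that the additive noise on Line~3 is drawn from a Gaussian distribution $N(0,\sigma^2)$ instead of a Laplace distribution, where $\sigma$ is tuned as a function of $\epsilon$, $\delta$ and $t$.

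First I would let $\mathcal{M}(V_j)=\textsc{Sum}(\textsc{Degree}(V_j))$ denote the per-week aggregate degree, exactly as in the proof of Theorem~\ref{thm:naive}. Since adding or removing a single edge changes the total number of outgoing edges by at most one, the $\ell_1$-sensitivity satisfies $\Delta^{\mathcal{M}}\leq 1$ in the edge-neighboring sense. Next, set a per-week zCDP budget $\rho_0$ and apply Theorem~\ref{thm:gauss}: the map $V_j\mapsto \mathcal{M}(V_j)+Y_j$ with $Y_j\sim N(0,1/(2\rho_0))$ is $\rho_0$-zCDP. The integer thresholding/rounding on Lines~4--5 is a deterministic post-processing, so by Proposition~\ref{prop:post} the one-week release $\mathcal{A}'(V_j)$ remains $\rho_0$-zCDP.

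Then I would invoke Proposition~\ref{prop:comp} across the $t$ weeks: the full release $\mathcal{A}(V)=(\mathcal{A}'(V_1),\ldots,\mathcal{A}'(V_t))$ is $t\rho_0$-zCDP. Finally, apply Lemma~\ref{lemma:bun}: for the prescribed target $(\epsilon,\delta)$, picking
\[
\rho_0 \;=\; \frac{1}{t}\left(\sqrt{\log(1/\delta)+\epsilon}-\sqrt{\log(1/\delta)}\right)^{2},
\]
so that $t\rho_0$ equals the $\rho$ of Equation~\eqref{eq:adp-cdp}, immediately upgrades the $t\rho_0$-zCDP guarantee to $(\epsilon,\delta)$-edge-DP. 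This matches the Gaussian scale $\sigma = 1/\sqrt{2\rho_0}$ that I expect Algorithm~\ref{alg:naive-delta} to use, and indeed $\sigma=O(\sqrt{t})$ rather than the $O(t)$ scale required by Laplace, consistent with the motivation given at the start of Section~\ref{sec:arp-delta}.

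The main obstacle, and the only really non-mechanical step, is bookkeeping the conversion between zCDP and $(\epsilon,\delta)$-DP so that the noise scale announced in Algorithm~\ref{alg:naive-delta} lines up precisely with the $\rho_0$ above; once that is matched, sensitivity, post-processing, and composition are all immediate from the tools assembled in Section~\ref{sec:prelim}. I do not anticipate any difficulty from the edge-neighboring structure itself, since the sum of out-degrees behaves identically here to its use in Theorem~\ref{thm:naive}.
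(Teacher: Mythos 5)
Your proposal is correct and follows essentially the same route as the paper's own proof: bound the edge-sensitivity of the weekly degree sum by $1$, apply the Gaussian mechanism with per-week budget $\rho/t$ (noise variance $t/2\rho$), absorb the threshold-and-round step via post-processing, compose over the $t$ weeks to get $\rho$-zCDP, and convert to $(\epsilon,\delta)$-edge-DP via Lemma~\ref{lemma:bun} with $\rho$ as in Equation~\eqref{eq:adp-cdp}. Your bookkeeping of $\rho_0=\rho/t$ and $\sigma=1/\sqrt{2\rho_0}$ matches the noise scale actually used in Algorithm~\ref{alg:naive-delta}.
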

	
	\begin{proof}
        Let $V_{j}\in\mathcal{G}$ be the directed graph of ARP requests in week $j$. Let $\mathcal{M}$ be the algorithm that computes the weekly total degrees and $D_j=\mathcal{M}(V_j)$  (Line~3 of Algorithm~\ref{alg:naive-delta}). As in the proof of Theorem~\ref{thm:naive}, the edge-sensitivity $\Delta^{\mathcal{M}}$ satisfies $\Delta^{\mathcal{M}} \leq 1$.
           Observe that Line~3-6 of Algorithm~\ref{alg:naive-delta} can be written as a randomized algorithm $\mathcal{A}':\mathcal{G}\to \mathbb{R}^{k}$ defined by   
    \[
    \mathcal{A}'(V_j) = \mathcal{P}(\mathcal{M}(V_j)+(Y_1,\dots,Y_k))
    \] 
where $Y_i\sim N(0,t/2\rho)$ and $\mathcal{P}$ corresponds to the \emph{threshold-then-round} function computed on all bin counts (Line~5-6). It follows from Theorem~\ref{thm:gauss} and Proposition~\ref{prop:post} that $\mathcal{A}'$ is $\rho/t$-zCDP.

Then, we can define Algorithm~\ref{alg:naive-delta} as a randomized algorithm $\mathcal{A}$ as follows:
\[
\mathcal{A}(V) = (\mathcal{A}'(V_1),\ldots,\mathcal{A}'(V_t))
\]  
By Proposition~\ref{prop:comp}, we have that the Algorithm~\ref{alg:naive-delta} is $t\rho/t$-zCDP or $\rho$-zCDP. Using Lemma~\ref{lemma:bun} and recalling the definition of $\rho$ in Line 1 of Algorithm~\ref{alg:naive-delta}, we conclude that this algorithm is also $(\epsilon,\delta)$-edge-DP.

     \end{proof}

	\begin{algorithm}[!ht]
		\caption{Na\"ive-$\delta$ Approach}
		\label{alg:naive-delta}
		\DontPrintSemicolon
		\KwIn{$V=\{V_1, V_2, ..., V_t\}$, $t$, $\epsilon$, $\delta$}
		\KwOut{$D = \{D_1, D_2, ..., D_t\}$}
		
		$\rho \gets \left(\sqrt{\log{(1/\delta)}+\epsilon} - \sqrt{\log{(1/\delta)}}\right)^2$ \;
		\For{$j =1$ \KwTo $t$}{
			$D_j \gets \textsc{Sum}(\textsc{Degree}(V_j))$ \;
			$D_j \gets  D_j +  N(0, t/2\rho)$ \; 
			\lIf{$D_j > 0$}{$D_j \gets \text{int}(D_j)$}
			\lElse{$D_j \gets 0$}
		}
	\end{algorithm}

	\subsection{Histogram-based-$\delta$ Approach}
    We aim to construct an $(\epsilon,\delta)$-node-DP with less noises compared to the $\epsilon$-node-DP algorithm in Section~\ref{histogram}. We still rely on a histogram-based approach as it has small sensitivity upon adding/removing a node. Our histogram-based-$\delta$ approach is described in Algorithm~\ref{alg:histogram-delta}.
	\begin{theorem}\label{thm:hist2}
        The histogram-based-$\delta$ approach as described in Algorithm~\ref{alg:histogram-delta} is $(\epsilon,\delta)$-node-DP.
	\end{theorem}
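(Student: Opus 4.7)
The plan is to mirror the structure of the proof of Theorem~\ref{thm:naive2}, but import the sensitivity analysis from Theorem~\ref{thm:hist} in place of the edge-sensitivity argument. Concretely, for a fixed week $j$, I would let $V_j$ and $V_j'$ be node-neighboring graphs and let $\mathcal{M}:\mathcal{G}\to\mathbb{R}^k$ be the map that sends a graph to its per-degree histogram over $k$ bins. The first step is to establish that $\Delta^{\mathcal{M}}\leq 1$ under node-neighboring: as in the proof of Theorem~\ref{thm:hist}, adding or removing a single \user shifts the count of exactly one bin by exactly one, so the $\ell_1$-distance between $\mathcal{M}(V_j)$ and $\mathcal{M}(V_j')$ is bounded by $1$.

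Given the sensitivity bound, I would invoke the Gaussian mechanism (Theorem~\ref{thm:gauss}) with $\rho$ rescaled to $\rho/t$, i.e., sample independent noises $Y_i\sim N(0,t/(2\rho))$ and add one to each bin count. This yields a per-week randomized algorithm $\mathcal{A}'(V_j)=\mathcal{M}(V_j)+(Y_1,\dots,Y_k)$ that is $\rho/t$-zCDP. The threshold-then-round step applied to each bin is a deterministic post-processing, so by Proposition~\ref{prop:post} it preserves the $\rho/t$-zCDP guarantee. Composing the $t$ weekly releases via Proposition~\ref{prop:comp} gives that the full algorithm $\mathcal{A}(V)=(\mathcal{A}'(V_1),\dots,\mathcal{A}'(V_t))$ is $t\cdot(\rho/t)=\rho$-zCDP. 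Finally, since $\rho$ is defined in Algorithm~\ref{alg:histogram-delta} to be $\bigl(\sqrt{\log(1/\delta)+\epsilon}-\sqrt{\log(1/\delta)}\bigr)^2$, applying Lemma~\ref{lemma:bun} (specifically equation~\eqref{eq:adp-cdp}) converts the $\rho$-zCDP guarantee into $(\epsilon,\delta)$-node-DP, which is the desired claim.

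The main obstacle, in my view, is not the composition or the Gaussian noise calibration, which are essentially bookkeeping after the pattern set by Theorem~\ref{thm:naive2}; rather, it is justifying the node-sensitivity bound $\Delta^{\mathcal{M}}\leq 1$ on the histogram. Removing a node deletes all its incident edges, and one could worry that this alters the out-degrees of \emph{other} nodes and therefore shifts their histogram bins, potentially blowing up the $\ell_1$-sensitivity. The proof of Theorem~\ref{thm:hist} handles this by counting only the disappearance of the removed \user's own degree entry, implicitly treating the degree as an attribute intrinsic to the node (as the paper's definition of degree as ``number of \user-s that receives ARP requests from \user~$k$'' is directional and measured at the source). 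I would follow that same convention explicitly, so the histogram changes only in the bin corresponding to the removed node's own out-degree, recovering $\Delta^{\mathcal{M}}\leq 1$ and allowing the remainder of the zCDP-to-$(\epsilon,\delta)$-DP pipeline to go through verbatim.
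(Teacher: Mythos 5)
Your proposal is correct and follows the paper's proof of this theorem essentially step for step: node-sensitivity $\Delta^{\mathcal{M}}\leq 1$ for the per-week degree histogram, the Gaussian mechanism at scale $t/(2\rho)$ giving $\rho/t$-zCDP per week, post-processing to absorb the threshold-then-round step, composition over the $t$ weeks to obtain $\rho$-zCDP, and Lemma~\ref{lemma:bun} together with the choice of $\rho$ in Line~1 to conclude $(\epsilon,\delta)$-node-DP. The obstacle you single out is genuine, and the paper resolves it exactly as you describe --- the proof of Theorem~\ref{thm:hist} counts only the removed \user's own histogram bin and silently ignores that deleting a node also deletes its in-edges, which are out-edges of other \user-s and could shift their bins --- so your argument is faithful to, and no less rigorous than, the published one.
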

\begin{proof}
    Let $V_j$ and $V'_j$ be node-neighboring directed graph at time $j$, i.e., $V'_j$ can be obtained from $V_j$ by adding or removing a single node. Let $\mathcal{M}:\mathcal{G}\to \mathbb{R}^{k}$ be the algorithm that computes the histogram of the degrees, i.e., the entries of $\mathcal{M}(V_j)$ and $\mathcal{M}(V'_j)$ are the count of nodes by their degrees. As in the proof of Theorem~\ref{thm:hist}, the node-sensitivity $\Delta^{\mathcal{M}}$ satisfies $\Delta^{\mathcal{M}} \leq 1$
   
    Looking at Algorithm~\ref{alg:histogram-delta}, we observe that Line~3-7 can be written as a randomized algorithm $\mathcal{A}':\mathcal{G}\to \mathbb{R}^{k}$ defined by   
    \[
    \mathcal{A}'(V_j) = \mathcal{P}(\mathcal{M}(V_j)+(Y_1,\dots,Y_k))
    \] 
where $Y_i\sim N(0,t/2\rho)$ and $\mathcal{P}$ corresponds to the \emph{threshold-then-round} function computed on all bin counts (Line~6-7). It follows from Theorem~\ref{thm:gauss} and Proposition~\ref{prop:post} that $\mathcal{A}'$ is $\rho/t$-node-DP.

Then, we can define Algorithm~\ref{alg:histogram-delta} as a randomized algorithm $\mathcal{A}$ as follows:
\[
\mathcal{A}(V) = (\mathcal{A}'(V_1),\ldots,\mathcal{A}'(V_t))
\]  
By Proposition~\ref{prop:comp}, we have that the histogram-based approach (described as in Algorithm~\ref{alg:histogram-delta}) is $t\rho/t$-zCDP or $\rho$-zCDP. From the definition of $\rho$ in Line 1 of Algorithm~\ref{alg:histogram-delta}), we conclude using Lemma~\ref{lemma:bun} that this algorithm is also $(\epsilon,\delta)$-node-DP.
\end{proof}

	\begin{algorithm}[!t]
		\caption{Histogram-based-$\delta$ Approach}
		\label{alg:histogram-delta}
		\DontPrintSemicolon
		\KwIn{$V=\{V_1, V_2, ..., V_t\}$, $t$, $\epsilon$, $\delta$}
		\KwOut{$D = \{D_1, D_2, ..., D_t\}$}
		
		$\rho \gets \left(\sqrt{\log{(1/\delta)}+\epsilon} - \sqrt{\log{(1/\delta)}}\right)^2$ \;
		\For{$j =1$ \KwTo $t$}{
			$D_j \gets \textsc{Histogram}(\textsc{Degree}(V_j))$ \;
			\ForEach{$bin \in D_j$} {
				$bin.count \gets bin.count + N(0, t/2\rho)$\;
				\lIf{$bin.count > 0$}{$bin.count \gets \text{int}(bin.count)$}
				\lElse{$bin.count \gets 0$}
			}
		}
	\end{algorithm}

	\section{Evaluation}\label{sec:exp}
	
	In this section, we evaluate our approaches by deploying them as part of a large-scale research project and reporting their utility from a real-world dataset extracted from such project.
	
	\subsection{Real-world Deployment}\vspace{2mm}
	
	\noindent\textbf{Background.} ASEAN-Wide Cyber-Security Research Testbed Project is a large-scale research project with collaboration between multiple universities primarily located in Southeast Asia including Prince of Songkla University, Thailand (PSU), Universitas Brawijaya, Indonesia (UB), University of Computer Studies Yangon, Myanmar (UCSY), Institute of Technology of Cambodia, Cambodia (ITC), University of Information Technology, Myanmar (UIT), and The University of Tokyo, Japan (UT). The ultimate goal of this project is to create a real-world public testbed of malware behaviors captured in ASEAN countries.

	\begin{figure}[!h]
		\centering
		\includegraphics[width=.8\linewidth]{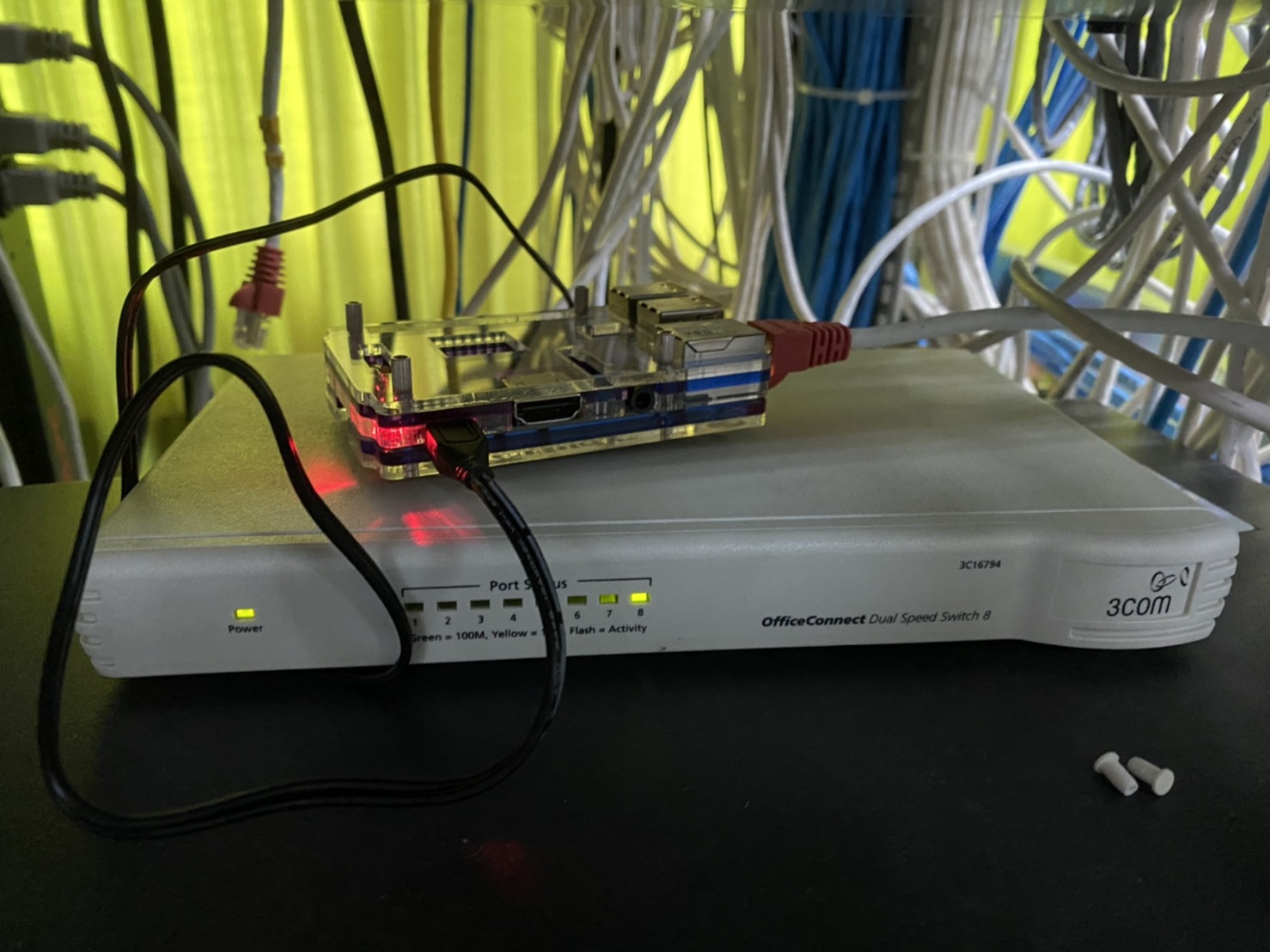}
		\caption{Monitoring device (raspberry-Pi 3B) deployed to a LAN}
		\label{fig:monitor-device}
	\end{figure}
	
	Independent of our work, the first phase of this project involves capturing, collecting and analyzing LAN data in Southeast Asian countries. 
	To achieve this task, a small monitoring device, implemented atop of a raspberry-Pi 3B in Figure~\ref{fig:monitor-device}, is introduced and placed  
	into several LANs across the ASEAN region. 
	This monitoring device observes and captures the network traffic flowing within a LAN and periodically outputs the captured data to our server, in which such data is analyzed and a model of ASEAN malware is eventually created.~\\
	
	\noindent\textbf{Deployment.} 
	Our work plays an important role in the second phase of this research project. It allows us to privately share aggregate ARP data collected from the previous phase with other project members as well as to the public domain.
	Our approaches enable a release mechanism of ARP-request data that still retains the utility of LAN anomaly detection.
	To assess utility, we evaluated our approaches on a subset of data captured and extracted from this research project.
	
	The extracted dataset contains all ARP-request data observed and collected from 3 real-world LANs over a 30-week period. 
	These LANs are located in: (1) The University of Tokyo, Japan (thus, its dataset is labeled as \texttt{JPN}), (2) Prince of Songkla University -- Phuket Campus, Thailand (\texttt{HKT}) and (3) Prince of Songkla University -- Hatyai Campus, Thailand (\texttt{HDY}).
	Details about these monitored LANs can be found in Table~\ref{tab:exp}.~\\ 
	
	\begin{table*}[h!]
		\begin{center}
			\caption{Details of monitored LANs}
			\label{tab:exp}
			\resizebox{.9\textwidth}{!}{  
			\begin{tabular}{c|c|c|c|c|c|c|c}
				
				\multirow{2}{*}{Label} & \multicolumn{3}{c|}{Location of LAN} & \multicolumn{3}{c|}{Collection Period} & \multirow{2}{*}{\shortstack{\# \user-s ($n$)}} \\
				 & University & City & Country & Start Date & End Date & \# Weeks ($t$) & \\
				\hline\hline
				\texttt{JPN} & UT & Tokyo & Japan & Aug 9, 2019 & Mar 6, 2020 & 30 & 95 \\
				\texttt{HKT} & PSU & Phuket & Thailand & Nov 6, 2020 & June 4, 2021 & 30 & 63 \\
				\texttt{HDY} & PSU & Hat Yai & Thailand & Oct 21, 2020 & May 19, 2021 & 30 & 206 \\
			\end{tabular}
			}
		\end{center}
	\end{table*}
	
	\noindent\textbf{Parameter Selection.}
	As we collected ARP requests over a 30-week period, $t = 30$.
	The na\"ive approach involves no other parameters.
	Meanwhile, the histogram-based approach consists of an additional set of parameters: the number of bins and the width of each bin.
	Intuitively, a larger number of bins leads to smaller bin counts. 
	In such case, the noise injected by our approach would become too large, severely decreasing utility of the released data. To avoid this problem, we select the number of histogram bins to be relatively small -- 3. Specifically, we choose
	the first two bins to correspond to the number of \user-s whose
	degrees are 1 and 2, respectively; the third bin contains the
	number of \user-s with degree $\ge 3$.
	
	Finally, the approaches in Section~\ref{sec:arp-delta} consist of another parameter $\delta$.
    Recall from the remark in Section~\ref{sec:notions} that $\delta$ must be negligible with respect to the number of data points ($\#p$). In other words:

	\begin{center}	
	$\delta = \delta^\prime/\#p$ for some small $\delta^\prime$
	\end{center}
	
	In our target system, $\#p$ corresponds to $n$ and $n^2$ for the node-DP and edge-DP notions, respectively; See Table~\ref{tab:exp} for the number of \user-s ($n$) in each monitored LAN. Unless stated otherwise, we use $\delta^\prime = 0.01$ for all experiments. 
	Nonetheless, the impact of different $\delta^\prime$ values on the utility is also assessed in the next subsection.
	
	\subsection{Utility Assessment: RMSE}\label{sec:rmse}~\vspace{2mm}

	\noindent\textbf{RMSE.}
	In the context of differential privacy, one common utility metric is defined as an error between the released privatized values $z^*$ and the non-privatized aggregates $z$.
	We adopt a similar approach and select the root-mean-square error (RMSE) as our first evaluation metric:
	
	\[RMSE = \sqrt{ \frac{1}{n} \sum_{i=1}^{n} \left( z^*[i]-z[i] \right) ^2}\]
	
	\noindent where $z[i]$ and $z^*[i]$ represent the $i^{th}$ data point in $z$ and $z^*$, respectively.
	For the na\"ive approach and its variant, $z[i]$ corresponds to the sum of all \user's ARP degrees observed in week $i$, while $z^*[i]$ refers to the privatized output on the same ARP data. On the other hand, $z[i]$ represents a histogram bin in the histogram-based and histogram-based-$\delta$ approaches.~\\
	
	\noindent\textbf{Impact of $\epsilon$.}
	Recall that $\epsilon$ refers to a privacy budget in the DP notion and
	a lower value of $\epsilon$ implies stronger privacy, while possibly sacrificing utility.
	
	Figure~\ref{fig:various-eps} shows the impact of $\epsilon$ on the utility of the proposed approaches. Unsurprisingly, we achieve lower errors and thus better utility from a higher $\epsilon$. For all 3 monitored LANs, $\epsilon = 5$ seems to be a pragmatic choice in order to maintain a low error ($<10$) for all approaches. 
	
	Next, we show how much utility can be improved by using the approaches in Section~\ref{sec:arp-delta} instead of their counterparts in Section~\ref{sec:arp-eps}. The result, illustrated in Figure~\ref{fig:util-improv1}, suggests that both na\"ive-$\delta$ and histogram-based-$\delta$ approaches enjoy higher utility (i.e., a utility gain) when $\epsilon \le 4$. However, as the $\epsilon$ gets larger, this utility gain becomes smaller;
	in fact, the na\"ive-$\delta$ approach incurs a utility loss when $\epsilon \ge 8$ for all monitored LANs. This result suggests using the approaches in Section~\ref{sec:arp-delta} only when one needs stronger privacy, i.e., small $\epsilon$.
	
	Figure~\ref{fig:util-improv1} also indicates the histogram-based-$\delta$ approach significantly outperforms the na\"ive-$\delta$ approach in terms of the utility gain. For $\epsilon \le 4$, the histogram-based-$\delta$ approach provides $\ge 28\%$ utility gain, while a smaller amount of utility gain ($\le 20\%$) can be realized in the na\"ive-$\delta$ approach. This is expected because the histogram-based-$\delta$ approach introduces a smaller value of $\delta$ (see the remark in Section~\ref{sec:notions}), making the additive noise smaller and thus resulting in the higher utility gain. 
	
	In addition, $n$ also has a direct impact to $\delta$ and hence to the overall utility. As seen in Figure~\ref{fig:util-improv1}, among all monitored LANs, \texttt{HDY} has the highest number of \user-s and therefore suffers the lowest utility gain.~\\
	
	\noindent\textbf{Impact of $\delta^\prime$.} We now assess the impact of $\delta^\prime$ on the utility of our approaches. Figure~\ref{fig:various-delta} shows RMSE of the na\"ive-$\delta$ and histgoram-based-$\delta$ approaches for different values of $\delta^\prime$. As expected, increasing $\delta^\prime$ results in a decrease in RMSE and thus improves the utility of our approaches.
	This decrease is logarithmic as a function of $\delta^\prime$.
	
    The utility gain of the na\"ive-$\delta$ and histgoram-based-$\delta$ approaches with respect to their original counterparts is illustrated in Figure~\ref{fig:util-improv2}. Our approaches benefit from the higher utility gain when $\delta^\prime$ is larger. For most $\delta^\prime$ values, the histogram-based-$\delta$ approach provides a positive utility gain over the histogram-based approach. Meanwhile, a utility gain can be achieved from the na\"ive-$\delta$ approach when $\delta^\prime \ge 10^{-3}$.
	
	This experimental result suggests that both na\"ive-$\delta$ and histogram-based-$\delta$ approaches still provide a utility advantage over their original counterparts even for $\delta^\prime$ smaller than $10^{-2}$ (up to $10^{-3}$ for the na\"ive-$\delta$ approach and $10^{-6}$ for the histogram-based-$\delta$ approach). In practice, one may choose to opt for smaller $\delta^\prime$ if a stronger privacy guarantee is needed.
	
		\begin{figure*}[htp]
		\centering
		\begin{subfigure}[t]{0.33\textwidth}
			\centering
			\includegraphics[width=\linewidth]{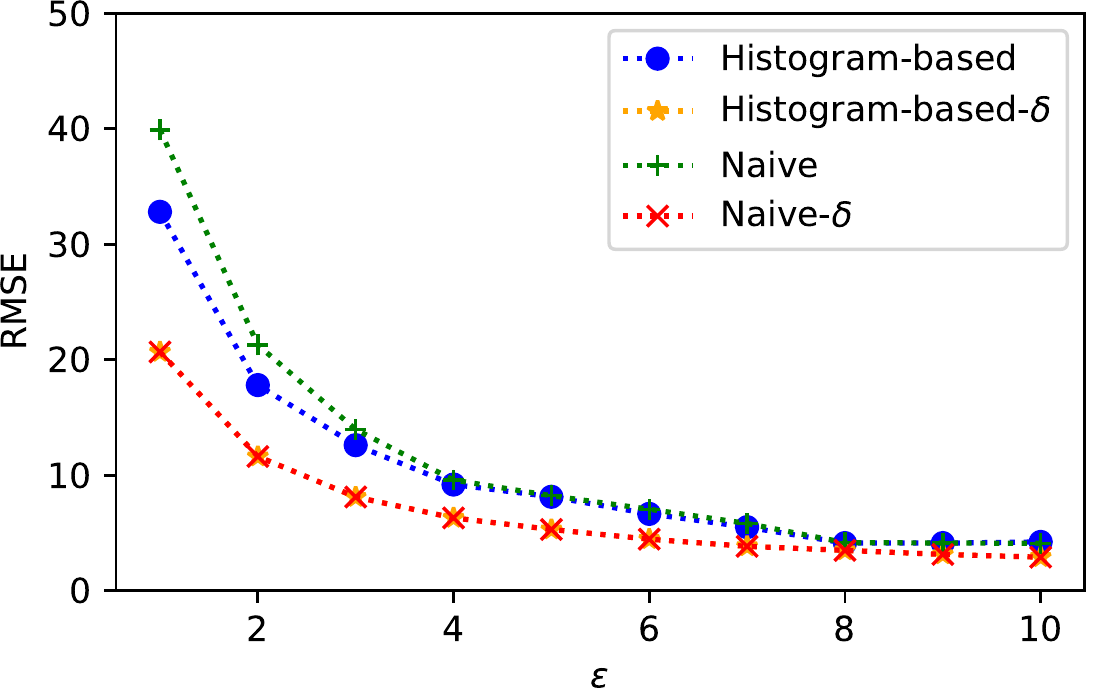}
			\caption{\texttt{JPN}}
		\end{subfigure}%
		~
		\begin{subfigure}[t]{0.33\textwidth}
			\centering
			\includegraphics[width=\linewidth]{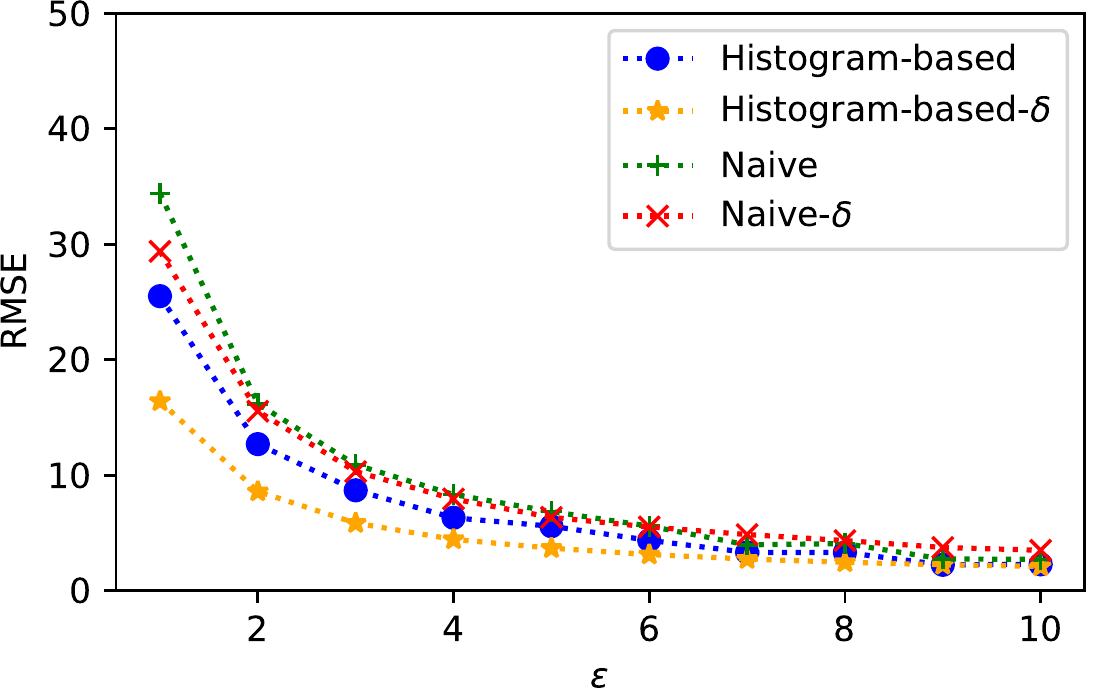}
			\caption{\texttt{HKT}}
		\end{subfigure}%
		~
		\begin{subfigure}[t]{0.33\textwidth}
			\centering
			\includegraphics[width=\linewidth]{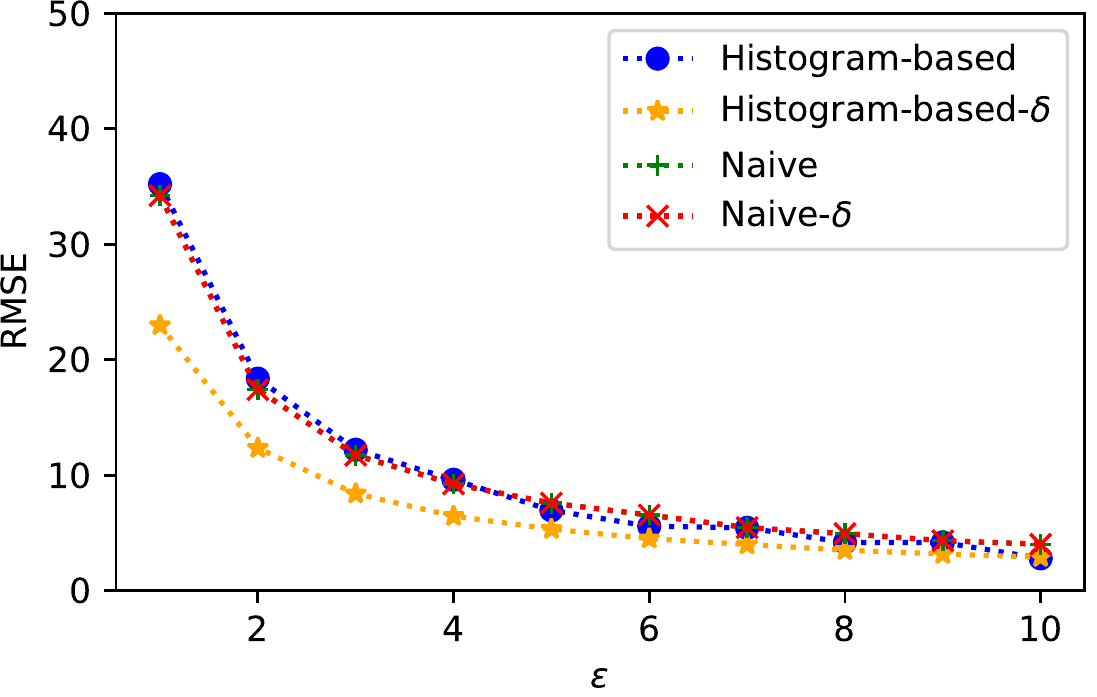}
			\caption{\texttt{HDY}}
		\end{subfigure}
		\caption{RMSE with different $\epsilon$ values}
		\label{fig:various-eps}
		\vspace{3mm}
		
		\centering
		\begin{subfigure}[t]{0.33\textwidth}
			\centering
			\includegraphics[width=\linewidth]{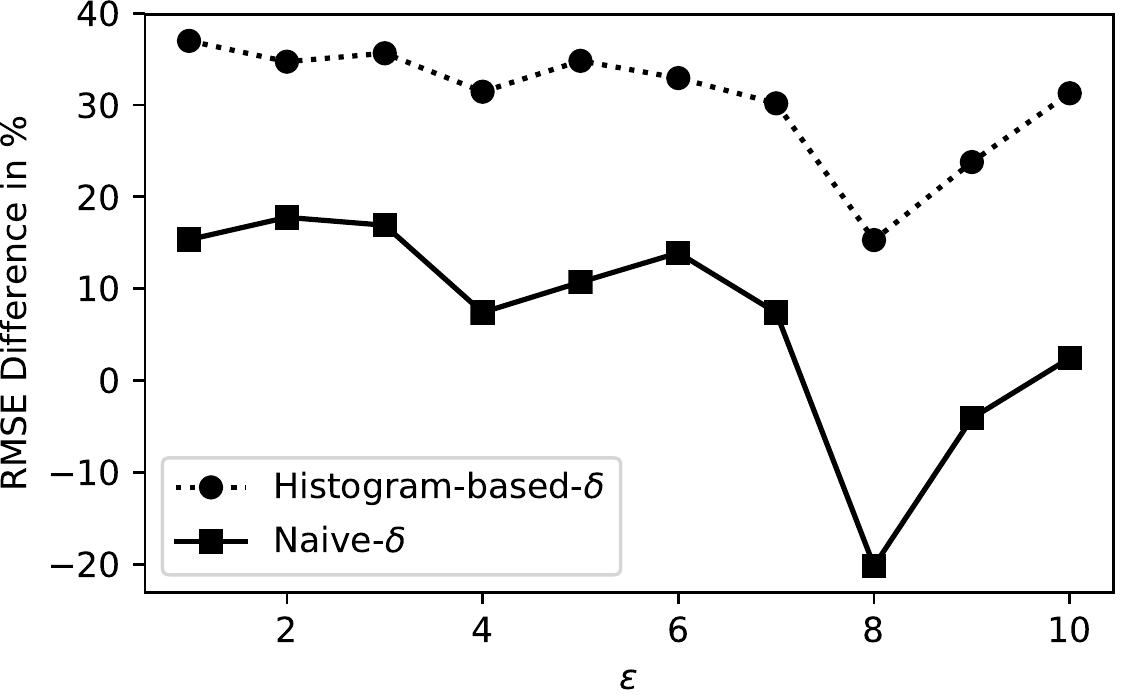}
			\caption{\texttt{JPN}}
		\end{subfigure}%
		~
		\begin{subfigure}[t]{0.33\textwidth}
			\centering
			\includegraphics[width=\linewidth]{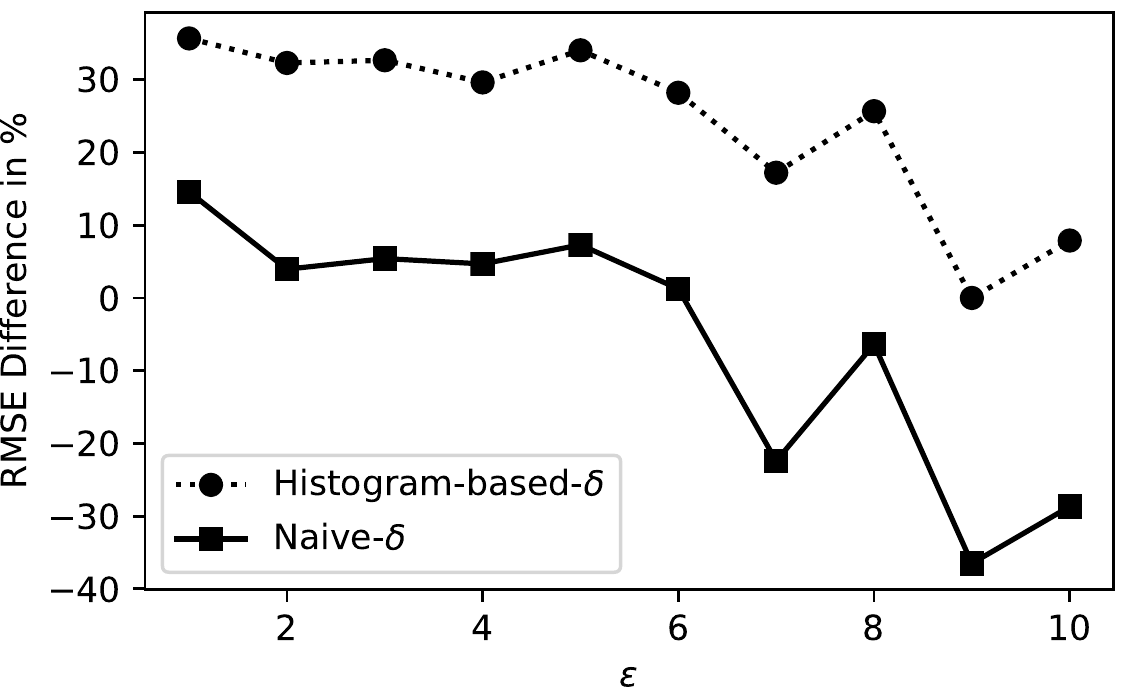}
			\caption{\texttt{HKT}}
		\end{subfigure}%
		~
		\begin{subfigure}[t]{0.33\textwidth}
			\centering
			\includegraphics[width=\linewidth]{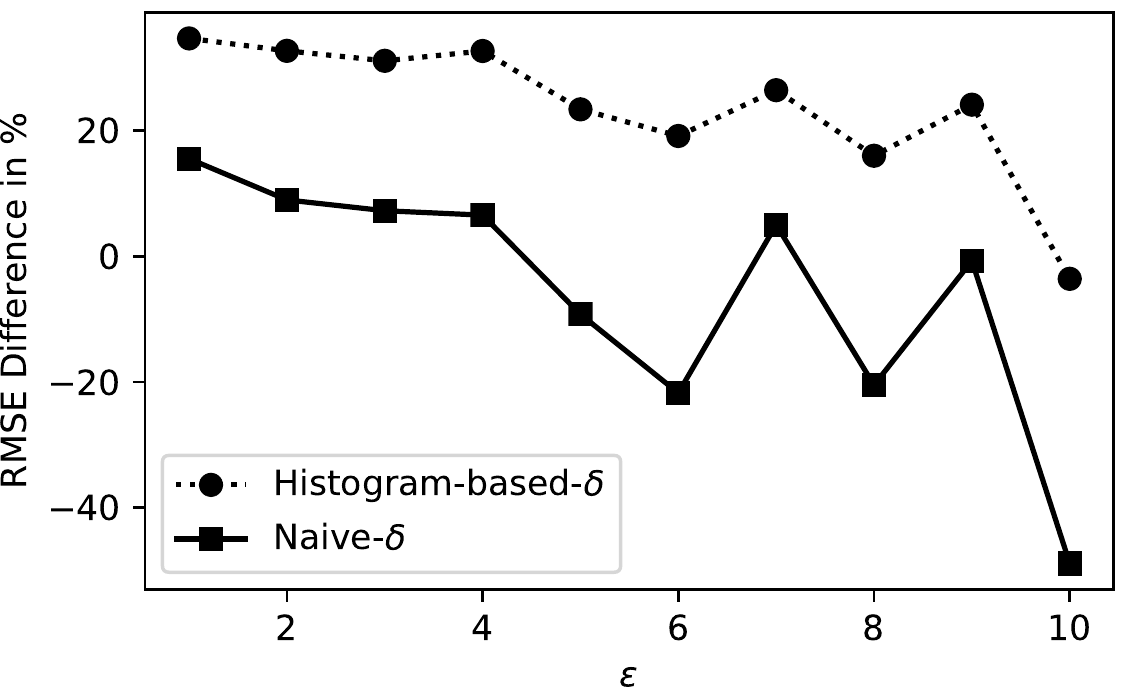}
			\caption{\texttt{HDY}}
		\end{subfigure}
		\caption{Utility gain (in $\%$) with respect to their $\epsilon$-edge/node-DP counterparts}
		\label{fig:util-improv1}
		\vspace{3mm}
		
		\centering
		\begin{subfigure}[t]{0.33\textwidth}
			\centering
			\includegraphics[width=\linewidth]{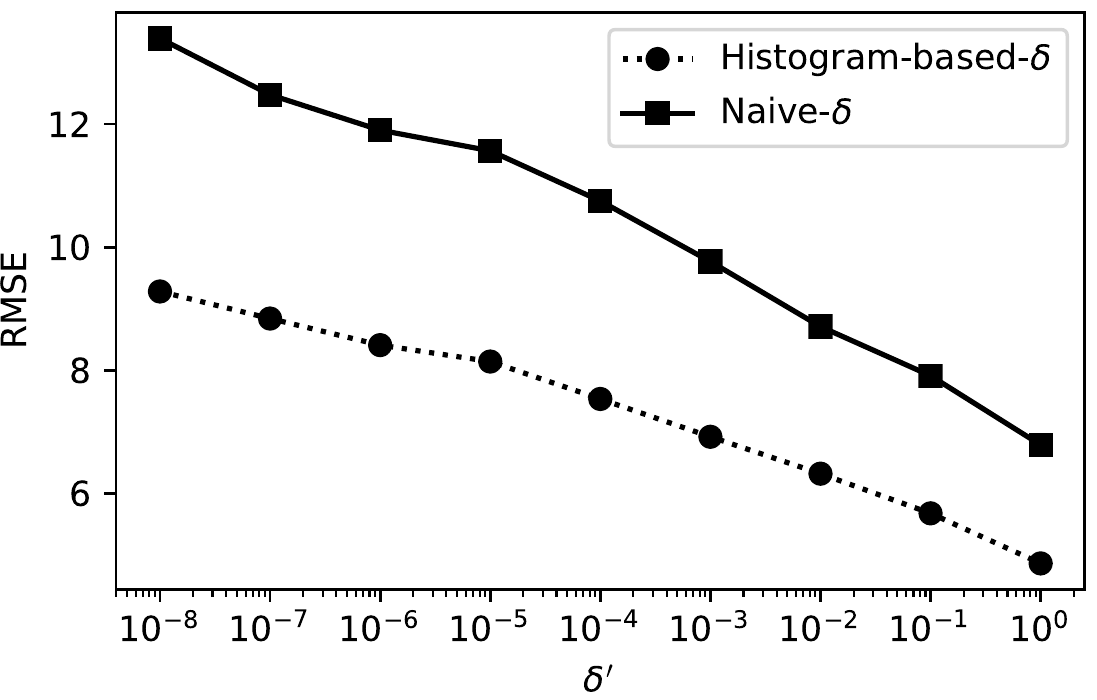}
			\caption{\texttt{JPN}}
		\end{subfigure}%
		~
		\begin{subfigure}[t]{0.33\textwidth}
			\centering
			\includegraphics[width=\linewidth]{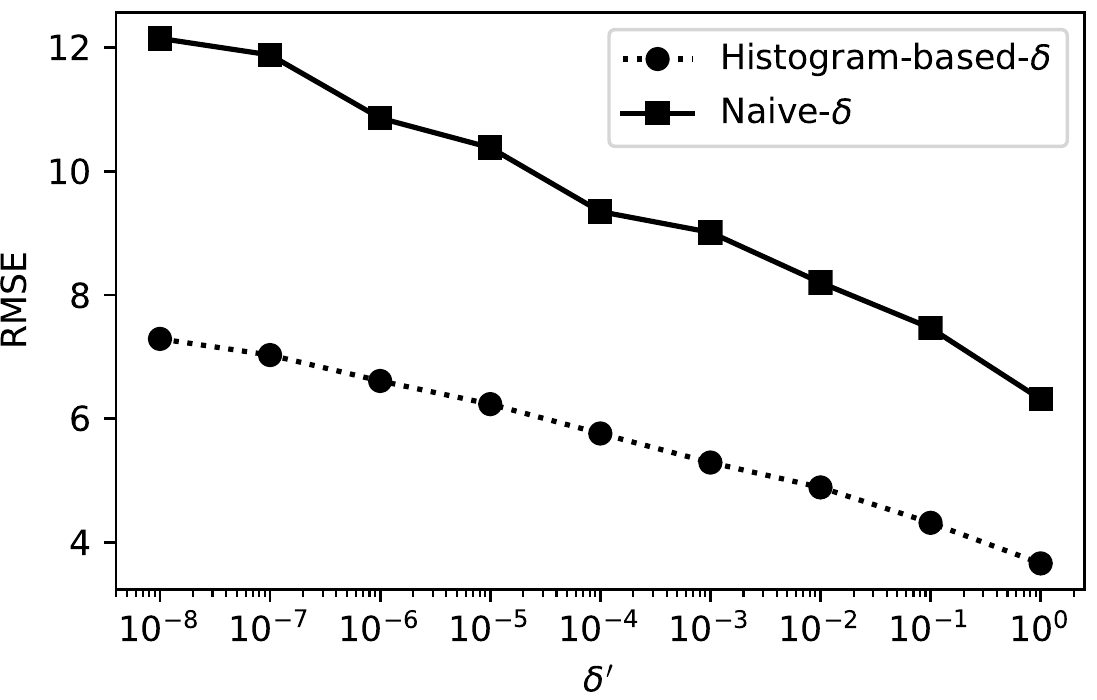}
			\caption{\texttt{HKT}}
		\end{subfigure}%
		~
		\begin{subfigure}[t]{0.33\textwidth}
			\centering
			\includegraphics[width=\linewidth]{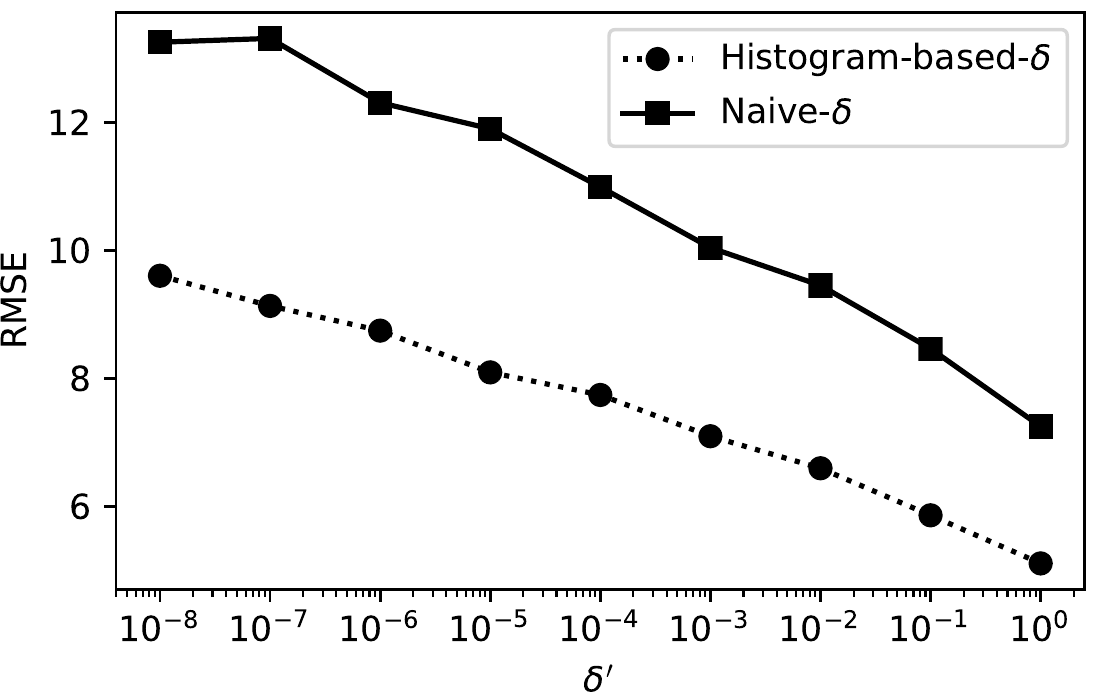}
			\caption{\texttt{HDY}}
		\end{subfigure}
		\caption{RMSE with different $\delta^\prime$ values where $\delta = \delta^\prime/(\#p)$ and $\epsilon$ is fixed to 1}
		\label{fig:various-delta}
		\vspace{3mm}
		
		\centering
		\begin{subfigure}[t]{0.33\textwidth}
			\centering
			\includegraphics[width=\linewidth]{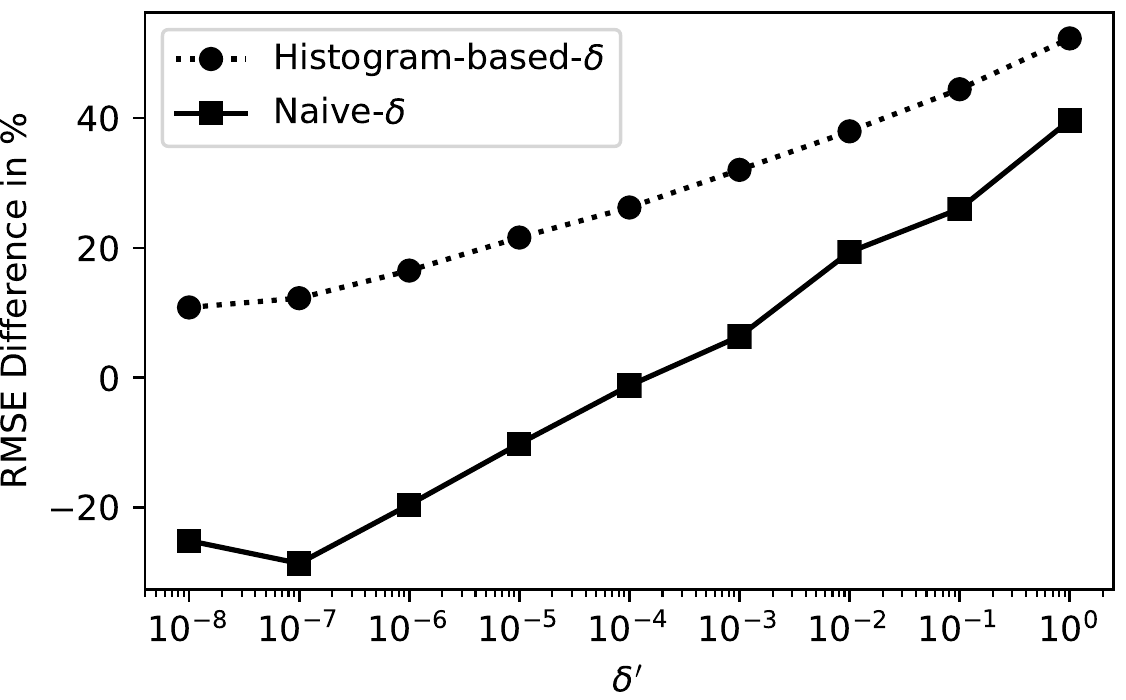}
			\caption{\texttt{JPN}}
		\end{subfigure}%
		~
		\begin{subfigure}[t]{0.33\textwidth}
			\centering
			\includegraphics[width=\linewidth]{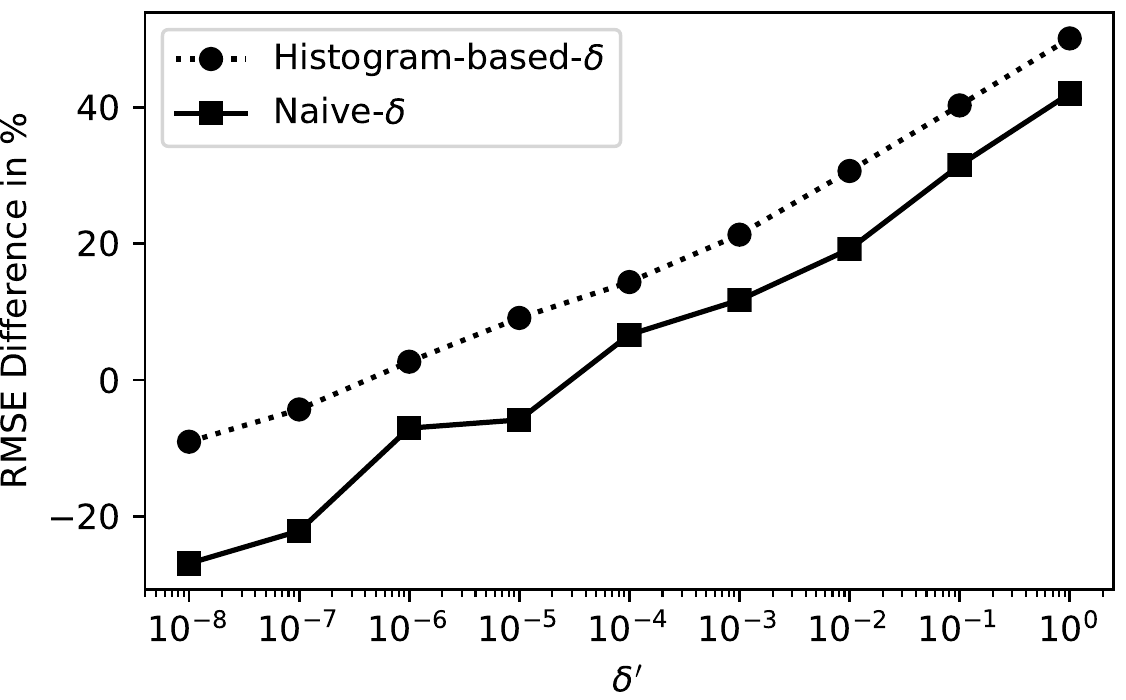}
			\caption{\texttt{HKT}}
		\end{subfigure}%
		~
		\begin{subfigure}[t]{0.33\textwidth}
			\centering
			\includegraphics[width=\linewidth]{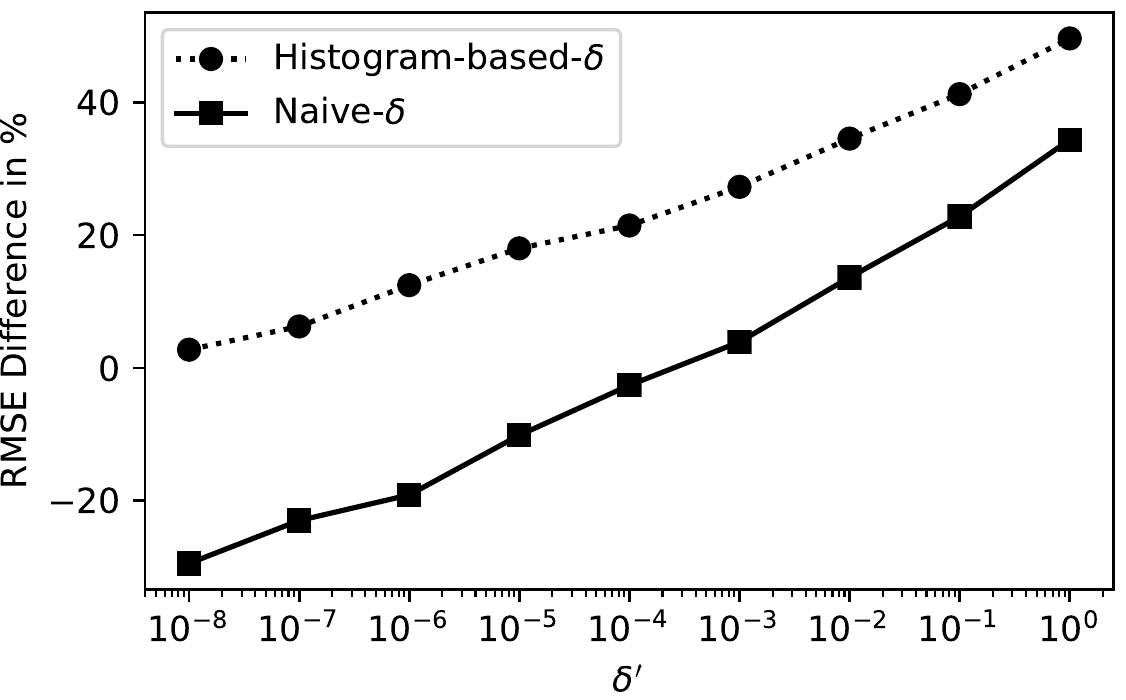}
			\caption{\texttt{HDY}}
		\end{subfigure}
		\caption{Utility gain (in $\%$) with respect to their $\epsilon$-edge/node-DP counterparts}
		\label{fig:util-improv2}
	\end{figure*}
	
	\subsection{Utility Assessment: Anomaly Detection Accuracy}~\vspace{1mm}
	
	\noindent\textbf{Anomaly detection algorithm.} In addition to low errors, it is also essential that outputs produced by our approaches can still be useful in identifying anomalous activities in LAN. 
	Hence, we further evaluate utility of our approaches by assessing them via a LAN anomaly detector. In this experiment, we consider our approaches to preserve the utility of anomaly detection if the anomaly detector classifies the privatized data the same way as the original (non-privatized) data.
	
    For the anomaly detector, we choose an approach based on exponentially weighted moving average and variance~\cite{Montgomery1991} proposed by Matsufuji et.al.~\cite{matsufuji2019arp} since it is tailored specifically for detecting LAN anomalies based on ARP data, which is also the focus in this work. All parameter values are selected based on the recommendation from~\cite{matsufuji2019arp}.
	
	It is worth noting that the anomaly detector in~\cite{matsufuji2019arp} only supports input of type univariate time series. However, the histogram-based approach and its variant produce a multivariate time series output (i.e., a time series of histograms), and hence cannot be used directly as input to the anomaly detector. To address this issue, we perform a simple transformation that converts two consecutive histograms into a single variable using the $L_1$ distance function; the result of this transformation is then given as input to the anomaly detector. More formally, the transformation is defined as:
	\begin{center}
		\[z^*[i] = \|hist[i] - hist[i+1]\|_1 \text{ for } i \in \{1,...,t-1\}\]
	\end{center}
	
	\begin{figure*}[!h]
		\centering
		\begin{subfigure}[t]{0.33\textwidth}
			\centering
			\includegraphics[width=\linewidth]{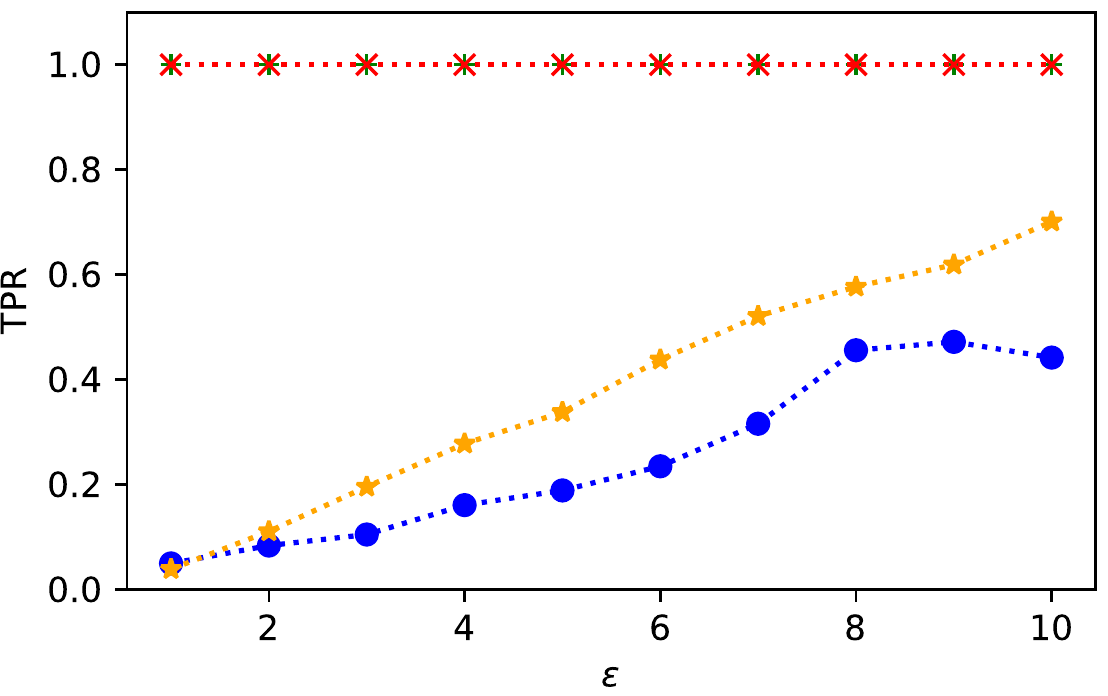}
			\caption{\texttt{JPN}}
		\end{subfigure}%
		~
		\begin{subfigure}[t]{0.33\textwidth}
			\centering
			\includegraphics[width=\linewidth]{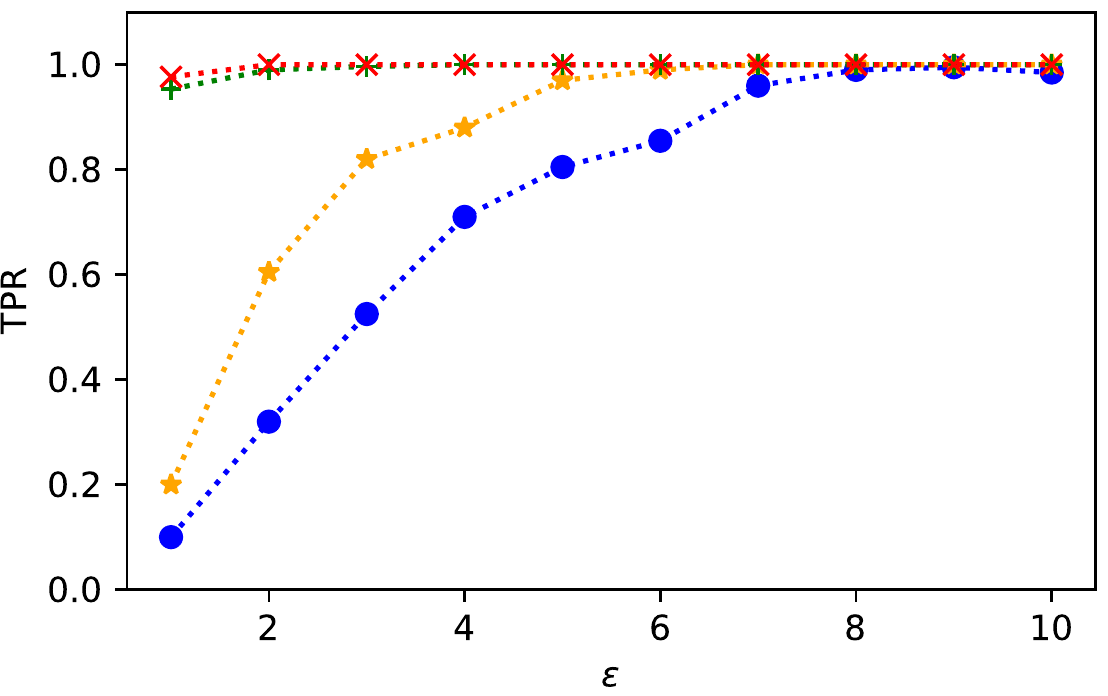}
			\caption{\texttt{HKT}}
		\end{subfigure}%
		~
		\begin{subfigure}[t]{0.33\textwidth}
			\centering
			\includegraphics[width=\linewidth]{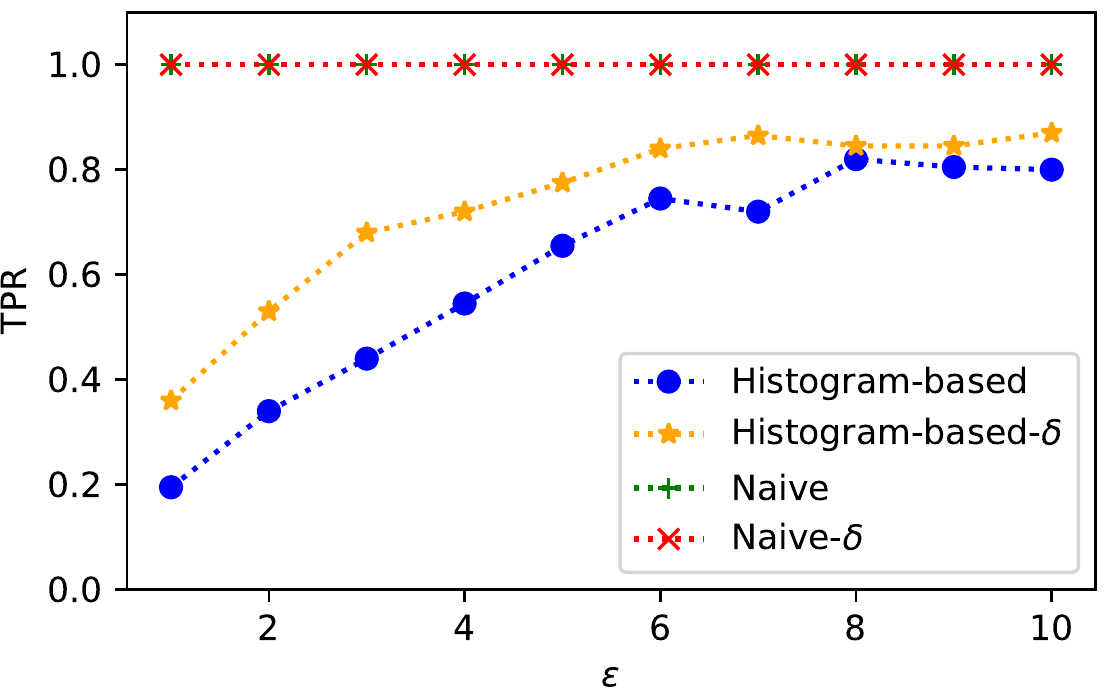}
			\caption{\texttt{HDY}}
		\end{subfigure}%
		\caption{$TPR$ result for different $\epsilon$ in all 3 monitored LANs}
		\label{fig:tpr}
		\vspace{5mm}
		
		\centering
		\begin{subfigure}[t]{0.33\textwidth}
			\centering
			\includegraphics[width=\linewidth]{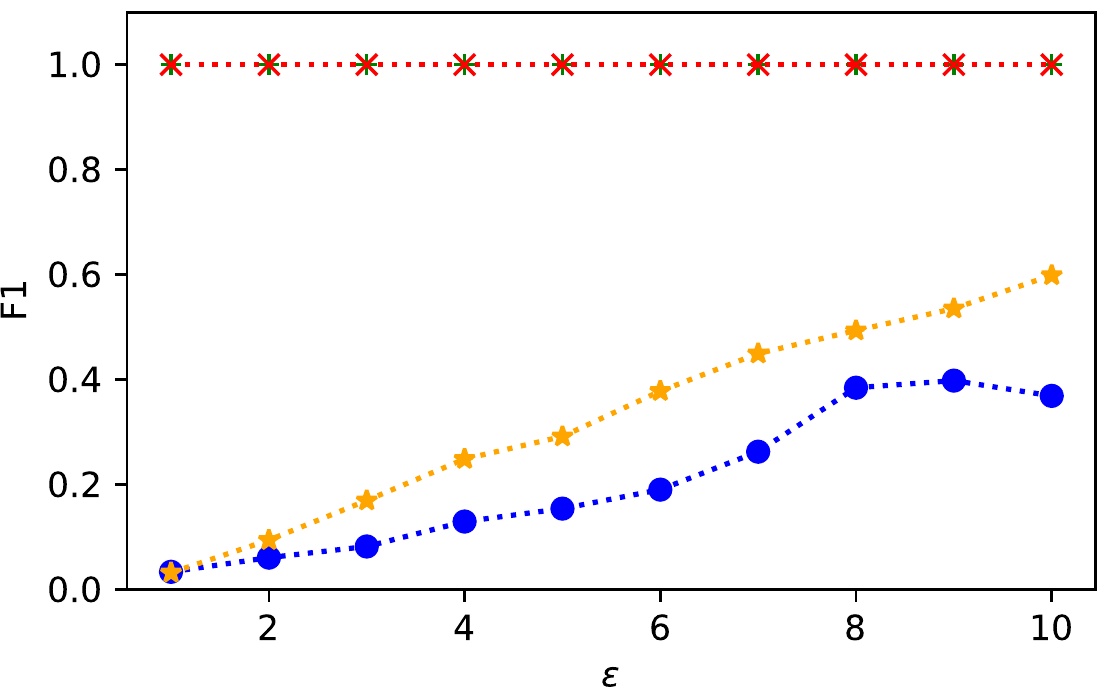}
			\caption{\texttt{JPN}}
		\end{subfigure}%
		~
		\begin{subfigure}[t]{0.33\textwidth}
			\centering
			\includegraphics[width=\linewidth]{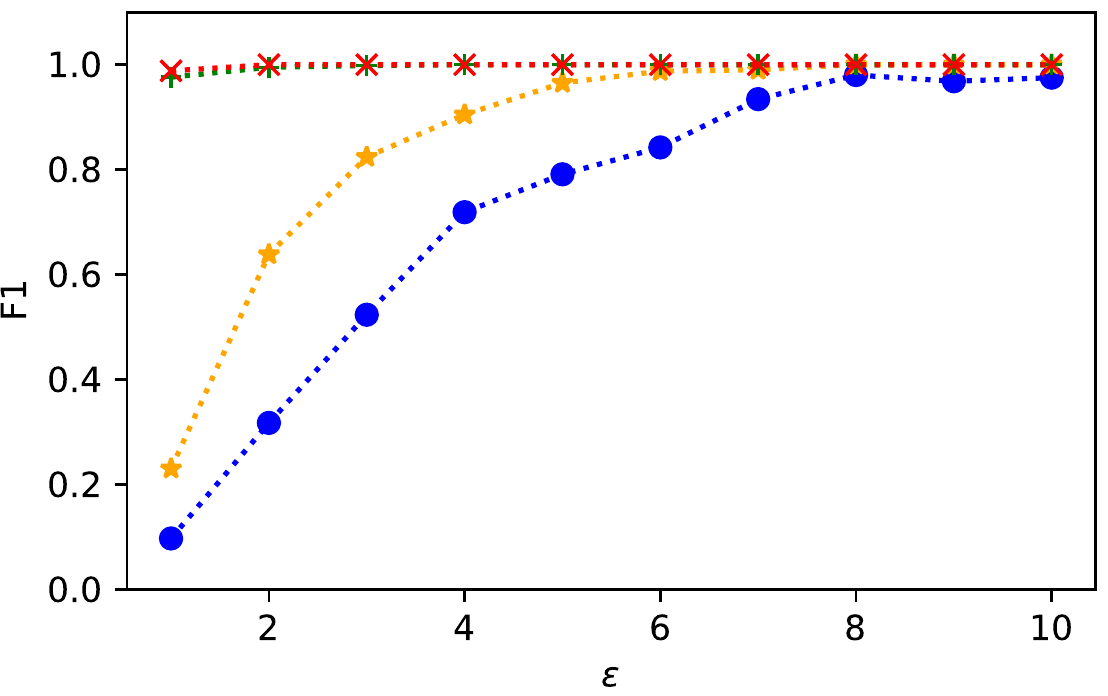}
			\caption{\texttt{HKT}}
		\end{subfigure}%
		~
		\begin{subfigure}[t]{0.33\textwidth}
			\centering
			\includegraphics[width=\linewidth]{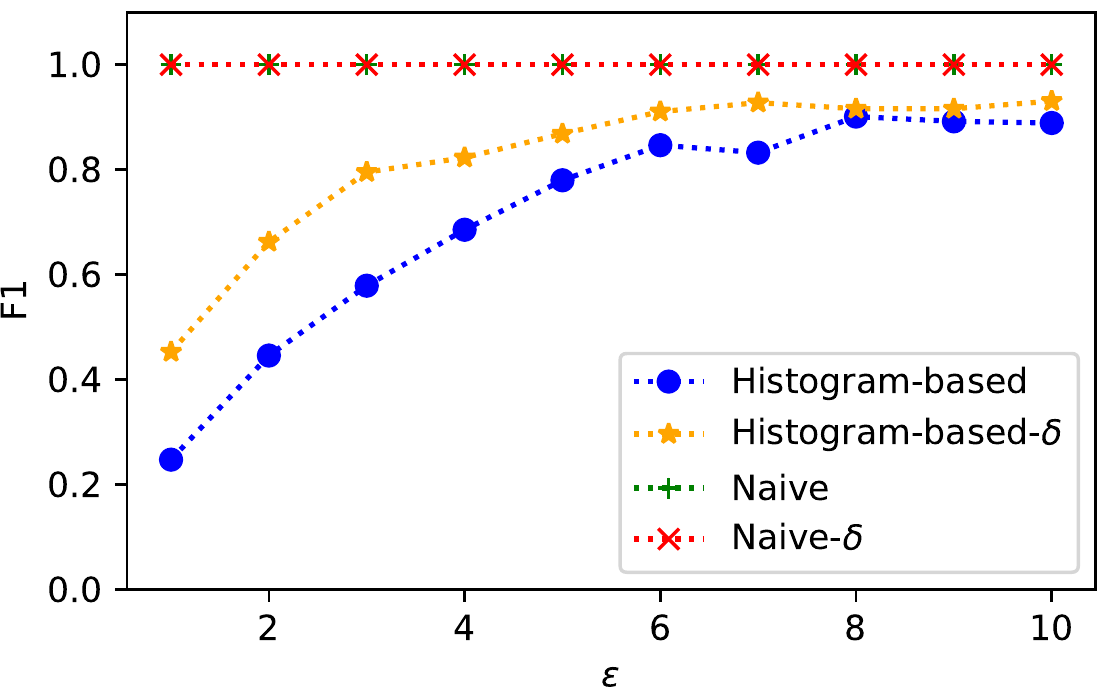}
			\caption{\texttt{HDY}}
		\end{subfigure}%
		\caption{$F_1$ result for different $\epsilon$ in all 3 monitored LANs}
		\label{fig:f1}
	\end{figure*}
	
	\noindent\textbf{Metrics.} In this experiment, we evaluate utility of our approaches using two metrics: true positive rate ($TPR$) and $F_1$ score.
	In particular, we consider $z^*[i]$, a noisy data point produced by our approach, to be a true positive ($TP$) if the anomaly detector classifies both $z^*[i]$ and $z[i]$ as an anomaly, where $z[i]$ represents the original non-privatized counterpart. $z^*[i]$ is a false positive ($FP$) if the anomaly detector finds an anomaly in $z^*[i]$ but not in $z[i]$. 
	A true negative ($TN$) and a false negative ($FN$) are also defined similarly. 
	
	Based on these definitions, $TPR$ and $F_1$ metrics can be formulated as:
	\begin{center}
		\[TPR = \frac{TP}{TP+FN}\]
		
		\[F_1 = \frac{TP}{TP+.5(FP+FN)}\]
	\end{center}
	
	A high value of $TPR$ implies that a high percentage of anomalies detected in the original data is also captured as an anomaly in the privatized data.
    On the other hand, a high value of $F_1$ implies relatively small values of $FP$ and $FN$ compared to $TP$. 
	~\\
	
	\noindent\textbf{Results.} Figure~\ref{fig:tpr} and~\ref{fig:f1} show the utility of our approaches evaluated using $TPR$ and $F_1$ metrics, respectively. First, we can see that $\epsilon$ does not affect utility of the na\"ive and na\"ive-$\delta$ approaches as both approaches still provide almost perfect utility scores in all monitored LANs. 
	
	On the other hand, the histogram-based and histogram-based-$\delta$ approaches yield low utility for small $\epsilon$. The utility scores then become higher as $\epsilon$ increases. For \texttt{HKT}, both approaches achieve a reasonable score of $>.75$ with $\epsilon = 5$. Meanwhile, $\epsilon$ must be set to $6$ in order to achieve the same utility score in \texttt{HDY}. \texttt{JPN} requires the highest $\epsilon$ ($=12$) in order for the histogram-based-$\delta$ approach to perform $75\%~TPR$.	
	
	Lastly, the results also confirm that the histogram-based-$\delta$ approach significantly outperforms the na\"ive-$\delta$ approach in terms of utility. Thus, we recommend to deploy the histogram-based-$\delta$ approach over the histogram-based approach when one needs to publish ARP-request data with user privacy protection (i.e., corresponding to the node-DP notion); whereas, if edge-DP is sufficient, the na\"ive approach is a more reasonable choice over the na\"ive-$\delta$ approach as the former provides a stronger privacy guarantee 
	while both approaches achieve the similar utility performance.~\\
	
	\noindent\textbf{Comparison with RMSE.} In most cases, the utility results from $TPR$ and $F_1$ metrics are consistent with the previous results measured using RMSE in Section~\ref{sec:rmse}. That is, a higher $\epsilon$ leads to higher utility with lower RMSE and higher $TPR$ and $F_1$. 
	On the other hand, an extremely low value of $\epsilon$ (e.g., $\epsilon = 1$) renders the output data useless as it can no longer be used to reveal anomalies due to its low $TPR$/$F_1$.
	There is, however, one exception: the na\"ive and na\"ive-$\delta$ approaches surprisingly can still attain high $TPR$ and $F_1$ utility despite low $\epsilon$. This indicates that such approaches are more robust to additive noises than other approaches.
	
	\section{Discussion}\label{sec:disc}
	
	\noindent\textbf{ARP Fields.} Our approaches take as input ARP-degree data, which in turn makes use of only 5 fields in ARP packets: SHA, SPA, THA, TPA, and OPER. 
	In this work, we choose to discard the rest of the ARP fields (i.e., Hardware Type/Length (HTYPE/HLEN) and Protocol Type/Length (PTYPE/PLEN)) from our analysis.
	This is because, in practice, these discarded fields usually have fixed values that contain neither sensitive information nor anything meaningful to our approaches. For instance, since ARP is only applicable to IPv4, the PLEN field is always set to the value of 4 indicating the size of an IPv4 address; or HTYPE usually contains the value of 1 representing the ubiquitous Ethernet hardware type. As these fields are generally constant for all ARP packets, their absence does not affect privacy or utility to our approaches. 
	~\\

        \noindent\textbf{DP Mechanisms.}
    In this work, we focus on releasing ARP-degrees in differentially private manners. Publishing degrees has sensitivity of $1$ (removing a user's ARP request alters the total ARP-degrees by $1$), which is small compared to the number of ARP requests sent by all users. Thus we choose the noise perturbation methods, namely the Laplace and the Gaussian mechanism, to privatize the ARP-degrees. Another well-known differential privacy mechanism is the randomized response, whose standard deviation is $O\left( \frac{\sqrt{N}}{\epsilon} \right)$~\cite{BeimelNO08}, which is worse than the standard deviation of the Laplace and Gaussian mechanism, which is $O\left( \frac{1}{\epsilon}\right)$. There are also differential privacy mechanisms based on data synthesis~\cite{tao2021benchmarking}. However, as anomaly detection algorithms look for ``spiking'' behaviors at a particular time interval, these data synthetic approaches, which try to replicate the distribution of the data as a whole, will not be able to retain the spikes as well as the perturbation mechanisms.~\\
	
	\noindent\textbf{Time Interval.} In our evaluation, we consider the time interval for ARP-data collection to be in a unit of a week.
	Albeit a bit long, this design choice is necessary as it allows us to incorporate all data (which spans for 30 weeks) into our analysis with higher utility rate and without losing too much privacy budget.
	
	To illustrate this point, we conduct a new experiment on the \texttt{JPN} network where we aggregate and process ARP data on a shorter period, i.e., every day instead of every week. 
	Compared to the original experiment, we have observed a drastic decrease in the utility rate for all our approaches.
	As an example, for the na\"ive approach with $\epsilon=4$, the RMSE has increased by a factor of $6$ (from 10 to 60), while the $TPR$ and $F_1$ score have reduced substantially from $1.0$ to $\approx0.6$.
	~\\
	
	\noindent\textbf{Utility Metrics.} We evaluate our approaches using two utility metrics: RMSE and Anomaly Detection Accuracy. 
	We select the former because it is one of the most common metrics for measuring utility from a DP mechanism~\cite{yang2017survey}.
    Intuitively, it tells us ``how far apart the privatized data is from the original data". Since an anomalous activity appears as an unusual value in the data, a privacy-preserving mechanism with small RMSE would not perturb \emph{that value} by much, allowing such activity to be detected from the privatized data. Besides RMSE, there are other similar metrics with the same purpose, e.g., Mean Absolute Error. Even though we do not include them in this work, we expect the results from such metrics to be in line with our current results.
	
	Nonetheless, the RMSE does not directly indicate the ``true" utility in this work since our end goal is to detect LAN anomalies, not minimize error rates.
	To this end, we choose to include Anomaly Detection Accuracy as our second metric. This metric realistically gives us an idea of how effective our approaches are when performing on a real-world LAN anomaly detector~\cite{matsufuji2019arp}.
	
	Finally, we do not consider other utility metrics that target different types of data publication. For example, $L_p$-Error~\cite{xiao2015protecting} and Hausdorff Distance~\cite{hua2015differentially} are geared towards measuring utility in location privacy protection. Also, information-theoretic metrics~\cite{lopuhaa2019information} require the input to be generated from a probability distribution, which is not the case in this work.

	\section{Conclusion}\label{sec:conclude}
	This paper presents four approaches to privately releasing ARP-request data that can later be used for identifying anomalies in LAN.
	We prove that the na\"ive approach satisfies edge-differential privacy, and thus provides privacy protection on the user-relationship level.
	On the other hand, the histogram-based approach can provide node-differential privacy, thus leaking no information about a presence of each individual user. 
	We also propose two alternatives, named na\"ive-$\delta$ and histogram-based-$\delta$, which require even smaller additive noises than their original counterparts in exchange for a small probability that the privacy guarantee will not hold.
	Feasibility of our approaches is demonstrated via real-world experiments in which we show that, with a reasonable privacy budget value, our approaches yield low errors ($<10$ in RMSE) and also preserve more than 75\% utility of detecting LAN anomalies.
	
	\ignore{
	\section*{Data Availability}
	The data used to support the findings of this study are available from the corresponding author upon request.
	
	\section*{Conflict of Interest}
	The authors declare that there are no conflicts of interest.
	
    \section*{Funding statement}
	The ASEAN IVO (\url{http://www.nict.go.jp/en/asean_ivo/index.html}) project, ASEAN-Wide Cyber-Security Research Testbed Project,
	was involved in the production of the contents of this work and
    financially supported by NICT (\url{http://www.nict.go.jp/en/index.html}). This work is also financially supported by Chiang Mai University, Thailand.

	\section*{Acknowledgments}
	The preliminary (and much shorter) version of this manuscript was published in IEEE International Conference on Electrical Engineering/Electronics, Computer, Telecommunications and Information Technology (ECTI-CON) 2021. The authors are grateful to the anonymous reviews of ECTI-CON 2021 for their helpful comments.
	}

\bibliographystyle{IEEEtranN}
\bibliography{degPriv}

    \end{document}